\newcommand{\mylistbullet}{
	\hfill \begin{tikzpicture}[scale=0.40]
		\clip (-.35,-.35) rectangle (.35, .35);
		\node[rounded corners=0.8pt, draw=black!40, fill=black!40]{$ $};
	\end{tikzpicture}
}
\newcommand{\enum}[1]{
	\begin{enumerate}[label=\arabic*),itemsep=1mm,leftmargin=0.6cm]
		#1
    \end{enumerate}
	
}
\newcommand{\enumrom}[1]{
	
	\begin{enumerate}[label=\roman*),itemsep=1mm,leftmargin=0.6cm]
		#1
    \end{enumerate}
	
}
\newcommand{\items}[1]{
	
	\begin{itemize}[label=\tiny\mylistbullet,itemsep=1mm,leftmargin=0.44cm]
		#1
    \end{itemize}
	
}
\newcommand{\Eq}[2]{\begin{equation}\label{eq:#2}
	#1
\end{equation}}
\renewcommand*{\backref}[1]{}
\renewcommand*{\backrefalt}[4]{%
{\ifcase #1 Not cited explicitly.%
\or cited once on page #2.%
\else cited #1 times on pages #2.%
\fi}}
\newcommand{\trans}{\mathrm{T}} 
\newcommand{\herm}{\mathrm{H}}
\newcommand{\R}{\bm{\mathrm{R}}}
\newcommand{\C}{\bm{\mathrm{C}}}
\newcommand{\N}{\bm{\mathrm{N}}}
\newcommand{\ScPr}[2]{{\left\langle #1,#2 \right\rangle}}
\newcommand{\Norm}[1]{{\left\Vert #1\right\Vert}}
\newcommand{\Abs}[1]{{\left| #1 \right|}}
\newcommand{\Text}[1]{{\hspace{3mm} \text{#1} \hspace{3mm}}}
\newcommand{\Sum}[3]{{\sum\limits_{#1}^{#2}{#3}}}
\newcommand{\Prod}[3]{{\prod\limits_{#1}^{#2}{#3}}}
\newcommand{\D}[0]{\hspace{0.5mm}:\hspace{1.0mm}}
\DeclareMathOperator*{\Diag}{diag}
\DeclareMathOperator*{\Max}{max}
\DeclareMathOperator*{\Min}{min}
\DeclareMathOperator*{\Rk}{rk}
\DeclareMathOperator*{\Supp}{supp}
\DeclareMathOperator*{\Vectorize}{vec}
\DeclareMathOperator*{\Unif}{Unif}
\newtheorem{Lem}{Lemma}[section]
\crefname{Lem}{lemma}{lemmata}
\newtheorem{Thm}{Theorem}[section]
\crefname{Thm}{theorem}{theorems}
\crefname{Pro}{proposition}{propositions}
\newtheorem{Cor}{Corollary}[section]
\crefname{Cor}{corollary}{corollaries}
\theoremstyle{definition}
\crefname{Exm}{example}{examples}
\crefname{Rem}{remark}{remarks}
\newtheorem{Alg}{Algorithm}[section]
\crefname{Alg}{algorithm}{algorithms}
\newtheorem{Def}{Definition}[section]
\crefname{Def}{defintion}{defintions}
\newif\ifoutertikznest
\tikzstyle{lw} = [line width=3pt]
\pgfplotsset{compat=newest}
\definecolor{box_white}{cmyk}{0.0361,0.0251,0.0166,0}
\definecolor{text_black}{cmyk}{0.7979,0.7417,0.6916,0.6554}
\definecolor{tui_orange_dark}{cmyk}{0,0.6,1,0} 
\definecolor{tui_orange_light}{cmyk}{0.0000,0.0876, 0.1474, 0.0157}
\definecolor{tui_green_dark}{cmyk}{1,0,0.5,0.2}
\definecolor{tui_green_light}{cmyk}{0.0576,0.0041, 0.0000, 0.0471} 
\definecolor{tui_blue_dark}{cmyk}{1.0000,0.5000,0.0000,0.6000}
\definecolor{tui_blue_light}{cmyk}{0.0920,0.0440,0.0000,0.0196}
\definecolor{tui_red_dark}{cmyk}{0.0000,1.0000,1.0000,0.2000}
\definecolor{tui_red_light}{cmyk}{0.0000,0.1107,0.1107,0.0078}
\pgfplotsset{
    cycle list name=tui_dark,
}
\newcommand{\linePlot}[7]{
\begin{figure}[ht]
\begin{center}
\begin{tikzpicture}[inner frame sep=0,tight background]
\begin{axis}[
legend style={
font=\small,
at={(0.5,-0.20)},
anchor=north,
legend columns=-1,
legend cell align=left,
draw=none,
fill=none
},
xlabel style={yshift=0.5em},
ylabel style={yshift=-0.4em},
width=0.95\columnwidth,
height=0.70\columnwidth,
enlargelimits=0.0,
xlabel={\textbf{#1}},
ylabel={\textbf{#2}},
grid style={dashed,tui_blue_light},
grid=both,
font=\small,
xticklabel style={text height=1.5ex},
ytick style={draw=none},
xtick style={draw=none},
#3
]
#4
\legend{#5};
\end{axis}
\end{tikzpicture}
\caption{#6}\label{#7}
\end{center}
\end{figure}
}
\newcommand{\linePlotTwo}[9]{
\begin{figure}[ht]
\begin{center}
\begin{tikzpicture}[inner frame sep=0,tight background]
\begin{axis}[
axis y line*=left,
legend style={
font=\small,
at={(0.5,-0.20)},
anchor=north,
legend columns=-1,
legend cell align=left,
draw=none,
fill=none
},
xlabel style={yshift=0.5em},
ylabel style={yshift=-1.2em},
width=0.95\columnwidth,
height=0.70\columnwidth,
enlargelimits=0.0,
xlabel={\textbf{#1}},
ylabel={\textbf{#2}},
font=\small,
xticklabel style={text height=1.5ex},
#3
]
#4
\end{axis}
\begin{axis}[
axis y line*=right,
legend style={
font=\small,
at={(0.5,-0.20)},
anchor=north,
legend columns=-1,
legend cell align=left,
draw=none,
fill=none
},
xlabel style={yshift=0.5em},
ylabel style={yshift=0.2em},
width=0.95\columnwidth,
height=0.70\columnwidth,
enlargelimits=0.0,
xticklabels={,,}
ylabel={\textbf{#5}},
font=\small,
xticklabel style={text height=1.5ex},
#6
]
#7
\end{axis}
\end{tikzpicture}
\caption{#8}\label{#9}
\end{center}
\end{figure}
}
\definecolor{tui_orange_dark}{cmyk}{0,0.6,1,0} 
\definecolor{tui_orange_light}{cmyk}{0.0000,0.0876, 0.1474, 0.0157}
\definecolor{tui_green_dark}{cmyk}{1,0,0.5,0.2}
\definecolor{tui_green_light}{cmyk}{0.0576,0.0041, 0.0000, 0.0471} 
\definecolor{tui_blue_dark}{cmyk}{1.0000,0.5000,0.0000,0.6000}
\definecolor{tui_blue_light}{cmyk}{0.0920,0.0440,0.0000,0.0196}
\definecolor{tui_red_dark}{cmyk}{0.0000,1.0000,1.0000,0.2000}
\definecolor{tui_red_light}{cmyk}{0.0000,0.1107,0.1107,0.0078}
\title{Sparsity Order Estimation from a Single Compressed Observation Vector}
\author{Sebastian Semper$^*$,~\IEEEmembership{Student Member,~IEEE},
		    Florian R\"{o}mer,~\IEEEmembership{Senior Member,~IEEE},
		    Thomas Hotz, 
		    Giovanni DelGaldo,~\IEEEmembership{Member,~IEEE}
%
%
\thanks{
This work was partially supported by the Deutsche Forschungsgemeinschaft (DFG) project CoSMoS (grant GA 2062/2-1) and the 
Carl-Zeiss Foundation under the postdoctoral scholarship project ``EMBiCoS''.
}
%
%
%
%
\thanks{
The authors 
S.~Semper and T.~Hotz are with
Technische Universität Ilmenau,
Institute of Mathematics,
Ilmenau, Germany. 
The authors F.~Roemer and G.~Del~Galdo are with
Technische Universität Ilmenau,
Institute for Information Technology,
Ilmenau, Germany. 
The author G.~Del~Galdo is also with Fraunhofer Institute for Integrated Circuits (IIS),
Ilmenau, Germany.
}
\thanks{$*$ corresponding author}
}
\newtheorem{xremark}{Remark}
\begin{document}

\maketitle

\noindent
{\bf {\bf \slshape Abstract \symbol{124}}
We investigate the problem of estimating the unknown degree of sparsity from compressive measurements without the need to carry out a sparse recovery step. While the sparsity order can be directly inferred from the effective rank of the observation matrix in the multiple snapshot case, this appears to be impossible in the more challenging single snapshot case. We show that specially designed measurement matrices allow to rearrange the measurement vector into a matrix such that its effective rank coincides with the effective sparsity order. In fact, we prove that matrices which are composed of a Khatri-Rao product of smaller matrices generate measurements that allow to infer the sparsity order. Moreover, if some samples are used more than once, one of the matrices needs to be Vandermonde. These structural constraints reduce the degrees of freedom in choosing the measurement matrix which may incur in a degradation in the achievable coherence. We thus also address suitable choices of the measurement matrices. In particular, we analyze Khatri-Rao and Vandermonde matrices in terms of their coherence and provide a new design for Vandermonde matrices that achieves a low coherence.
}

%

\IEEEpeerreviewmaketitle

\section{Introduction}
The Compressive Sensing (CS) paradigm states that 
a signal $\bm{x} \in \C^N$ can be acquired using $M < N$ linear measurements, 
provided that it possesses a sparse representation with 
$K \ll N$ coefficients in a suitable basis $\bm{A} \in \C^{N \times N}$ \cite{candes2006uncertprinc}.
This observation has sparked considerable interest in many engineering disciplines
devoted to signal acquisition, including imaging (e.g., MRI \cite{lustig2007sparsemri}, X-Ray
\cite{sidky2008image}), Radar \cite{ender2010compressiveradar,baraniuk2007compressive} and 
spectrum sensing \cite{mishali2009blind,tropp2010beyond}, among many others.

The number of non-zeros $K$ is also referred to as the sparsity order or sparsity degree.
Even though the number of measurements required $M$ quite clearly depends on the sparsity order, the latter
is typically not known when acquiring a signal. Therefore, one usually determines $M$ according to some upper bound
$K \leq K_{\rm max}$ which must be known in order to design the measurement. However, since $K$ may vary quite
significantly between observation periods, such an approach can lead to systems that are too conservative, i.e.,
they take more measurements than necessary. 
Therefore, being able to estimate and monitor the sparsity order of a signal 
would be an attractive feature of a measurement device. It would allow to adjust the number of measurements to
the current sparsity of the signal which might vary in time (e.g., with the number of transmissions in cognitive
radio \cite{yucek2009survey} or with the complexity of an image in CS-based image acquisition \cite{duarte2008single}).
Moreover, knowing the sparsity order allows to improve
reconstruction algorithms by tuning algorithm-specific parameters such as the regularization parameter $\lambda$
in LASSO-type techniques (which is related to $K$ \cite{thrampoulidis2015asymptotically}) or stopping criteria in greedy techniques such as the
Orthogonal Matching Pursuit \cite{tropp2007signal}.

\subsection{Related Work}
%
%

Due to the prominent role the sparsity order plays in sparse signal recovery, the lack of knowledge
of the sparsity order has been recognized as a fundamental gap between theory and practice
\cite{ward2009cscrossval,malioutov2008seqobserv}. Still, existing literature on this subject is quite scarce
and very recent. Early papers on this subject have proposed to employ 
sequential measurements \cite{malioutov2008seqobserv} and cross-validation
type techniques \cite{boufounos2007sparse,ward2009cscrossval} where sequential reconstructions of the signal are considered.
Similarly, \cite{wang2012soecograd} shows that the sparsity order can be estimated from
the reconstruction, stating bounds on the number of measurements that are required  for this
step. However, the bounds are only found numerically, and the reconstruction process
involves cumbersome optimization problems. As we show, this can be avoided
by estimating the sparsity order directly based on the compressed observations.

A different approach is taken in \cite{lopes2012soe}, where the authors show that a specifically
tailored measurement procedure which consists of a Cauchy and a Gaussian distributed
measurement matrix allows to estimate a continuous measure of sparsity given by the ratio
of squared one- and two-norm of the signal. However, this measure is not equal to the sparsity order. In
fact, it is continuous and hence needs to be rounded to an integer number (which is not discussed
in \cite{lopes2012soe}). Moreover, the measurement process is very restrictive since
the distribution of the measurement matrices is pre-specified.

The authors in \cite{ravazzi2016gammasoe} propose to use sparse sensing matrices since these
allow to infer the degree of sparsity of a signal from the degree of sparsity of the measurement.
The resulting estimator has a very low complexity. However, it is only approaching the
true sparsity in the large system limit and hence not applicable to lower-dimensional problems.
Moreover, the proposed measurement matrices incur a certain performance degradation at the reconstruction stage due to the somewhat higher coherence, which is imposed by the restriction to sparse matrices.

A link between sparsity order estimation and rank estimation was put forward in  \cite{sharma2014soewbcr} for the multiple measurement vector (MMV) problem, which we have also studied in our own prior work,
both for the stationary case \cite{lavrenko2014ettsoe,lavrenko2015subnyq} as well as the case of time-varying support for block-stationary signals \cite{lavrenko2015timevarsupp}.
However, these approaches require a certain stationarity in the support patterns as well as a temporal variation in the coefficients of the sparse representation to create
linearly independent observations. This limits their applicability in many practical problems.

\subsection{Contributions}

This paper introduces a method for estimating the sparsity order of a signal from a single
snapshot of the compressive measurement. In particular, we 
propose to consider rearrangements of the observation vector into a matrix and show under which
conditions the rank of this matrix coincides with the sparsity order of the unknown signal.
Thereby, the sparsity order can be estimated by applying any known rank estimation scheme \cite{stoica2004model}.
Since there exist many efficient algorithms which estimate the model order in presence of perturbations such
as noise, this approach allows us to handle noisy measurements as well as the case of approximate
sparsity as well.

As we show, the proposed approach only requires the measurement matrix to possess a Khatri-Rao structure,
which leaves considerable room for optimizing their choice.
Moreover, in the case of overlapping blocks, one of the factors needs to be a Vandermonde matrix.
Therefore, the second part of the manuscript discusses the design of measurement matrices in presence
of these structural constraints. 
To facilitate the sparsity order estimation as well as the sparse reconstruction, the factors need to
possess a low coherence. 
As we show, in the presence of a Khatri-Rao structure, it is best to optimize
the factors of the Khatri-Rao product independently. 
We therefore investigate the coherence of Vandermonde matrices and propose
a new design algorithm, which efficiently constructs Vandermonde matrices with low coherence. 
Additionally, we  derive simple upper and lower bounds for the resulting coherence of this algorithm.

Compared to the earlier conference version \cite{roemer2014singsnapsoe}, this paper 
contains more rigorous versions of the main Theorems on the sparsity order estimation,
along with their proofs (which had to be partially omitted in the conference version due to lack of space).
Moreover, the optimization of the measurement matrix with respect to the structural constraints
was added.


\subsection{Notation}
Capital bold face letters denote matrices, whereas lowercase boldface letters stand for vectors. We use $[N]$ for $N \in \N$ as a shorthand for the set $\{1,\dots,N\}$. The set $\Supp(\bm x)$ is the set of indices where the vector $\bm x$ is non-zero, if $\Supp(\bm x) = s$ we call $\bm x$ $s$-sparse. The vector $\bm x_S$ subscribed with an index set $S$ is the restriction of $\bm x$ to the indices $S$ whereas the matrix $\bm X_S$ denotes the matrix $X$ restricted to the columns indexed by $S$. Moreover, $\Norm{\cdot}_p$ stands for the $\ell_p$ norm where $p$ is omitted if it is clear from the context. 
The operators $\bm{A} \otimes \bm{B}$ and $\bm{A} \diamond \bm{B}$ refer to the Kronecker product and the Khatri-Rao (column-wise Kronecker) product, respectively.
Finally, $\Rk^*\bm{A}$ represents the Kruskal rank of a matrix $\bm{A}$, i.e.\ the largest natural number $r$ so that every set of $r$ columns of $\bm{A}$ is linearly independent.
\section{Sparsity Order Estimation}

\subsection{Data Model Setup}

 We consider the following discrete sparse recovery problem
\begin{align}
    \bm{b} = \bm{A}\cdot \bm{S}\cdot \bm{x} + \bm{n},\label{orig_model}
\end{align}
 where $\bm{x} \in \C^N$ is $K$-sparse, $\bm{S} \in \C^{N\times N}$ represents the sparsifying basis,
and $\bm{A} \in \C^{m\times N}$ with $m\ll N$ is the measurement matrix.
Since it is assumed that $\bm{S}$ is a basis (and hence invertible) we can limit ourselves to the case
$\bm{S} = \bm{I}_N$ without loss of generality if $\bm{A}$ can be chosen freely since we can always
replace $\bm{A}$ by $\bm{\bar{A}} = \bm{A} \cdot \bm{S}^{-1}$ to account for an $\bm{S} \neq \bm{I}_N$.

Moreover, for clarity we consider the noise-free case $\bm{n}=\bm{0}$ first. The role of additive noise
is discussed in Section~\ref{sec:SSR_noisy}.

In order to estimate the sparsity order $K$ directly from $\bm{b}$, we propose to consider rearrangements
of $\bm{b}$ into a matrix. Specifically, we divide $\bm{b}$ into $k$ blocks $\bm{b}_i$ of length $\ell$
where the $i$-th block is given by
\begin{align}
   \bm{b}_i = \begin{bmatrix} b_{1+p\cdot(i-1)}, & b_{2+p\cdot(i-1)}, & \ldots, b_{\ell+p\cdot(i-1)} \end{bmatrix}^\trans \in \C^{\ell} \label{eqn:def_blocks}
\end{align}
for $i=1,2, \ldots, k$, where $p$ specifies by how many samples consecutive blocks are advancing. 
It is clear from~\eqref{eqn:def_blocks} that for $p=\ell$ the blocks $\bm{b}_i$ do not overlap whereas for $p<\ell$ the overlap grows with decreasing $p$, up
to the case of maximum overlap for $p=1$. Moreover, since $m$ samples are available, the parameters $\ell, p, k, m$ should satisfy $\ell+p\cdot(k-1) = m$.

The blocks $\bm{b}_i$ can be used to form the columns of a matrix $\bm{B} = \begin{bmatrix} \bm{b}_1, & \bm{b}_2, & \ldots, & \bm{b}_\ell \end{bmatrix} \in \C^{\ell \times k}$. Moreover, we define submatrices $\bm A_i$ by selecting the corresponding rows from $\bm A$ such that
\begin{align}
   \bm{b}_i = \bm A_i \cdot \bm x. \label{eqn:def_Ai}
\end{align}
The main idea of the proposed approach is to show that for a suitably chosen $\bm{A}$, we have $\Rk \bm{B} = K$ for any $K$-sparse $\bm{s}$ 
and therefore, the sparsity order
can be inferred from the rank of $\bm{B}$. Since the required structure differs for the case of overlapping and non-overlapping blocks,
we treat these two cases in separate subsections.

\subsection{Sparsity order estimation for non-overlapping blocks}

The following theorem summarizes the conditions on $\bm{A}$ to facilitate the sparsity order estimation for the case of non-overlapping blocks ($p = \ell$).

\begin{Thm}[\cite{roemer2014singsnapsoe}]\label{without_overlap}
	For $k$ blocks of the measurement $\bm b$ of length $\ell = m/k$ and any $r \leq \Min(k,\ell)$ the following statements are equivalent.
	\enum{
		\item\label{wthout_ovrl_1} For all $s \leq r$ and all $\bm{x}$ with $\Abs{\Supp\{\bm{x}\}} = s$ we have $\Rk \bm{B} = s$.
		\item\label{wthout_ovrl_2} $ \bm A = \bm \Phi \diamond \bm \Psi$ for some $\bm \Phi \in \C^{k \times N}$, $\bm \Psi \in \C^{\ell \times N}$ with $\Rk^* \bm \Phi \geq r$ and $\Rk^* \bm \Psi \geq r$.
	}
\end{Thm}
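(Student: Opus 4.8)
The plan is to prove both implications through a single structural identity: once a Khatri--Rao structure is available, for any support set $S = \Supp\{\bm x\}$ of size $s$ the rearranged matrix $\bm B$ admits the factorization
\begin{align}
\bm B = \bm \Psi_S \cdot \Diag(\bm x_S) \cdot \bm \Phi_S^\trans, \label{eq:sketch_fact}
\end{align}
from which the rank of $\bm B$ can be read off directly. Since $\Diag(\bm x_S)$ is invertible whenever $\bm x$ has support exactly $S$, the rank of the product equals $s$ precisely when both $\bm \Psi_S$ and $\bm \Phi_S$ have full column rank $s$. Establishing this rank characterization is the common core of the argument; the two implications then differ only in whether \eqref{eq:sketch_fact} is assumed or deduced.

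For the direction \ref{wthout_ovrl_2} $\Rightarrow$ \ref{wthout_ovrl_1}, I would first translate the Khatri--Rao hypothesis into the block form $\bm A_i = \bm \Psi \cdot \Diag(\bm \phi^{(i)})$, where $\bm \phi^{(i)}$ denotes the $i$-th row of $\bm \Phi$; this holds because the $n$-th column of $\bm \Phi \diamond \bm \Psi$ is $\bm \phi_n \otimes \bm \psi_n$, whose $i$-th block of length $\ell$ is $\Phi_{i,n}\,\bm \psi_n$. Substituting into $\bm b_i = \bm A_i \bm x$ and stacking the resulting columns yields \eqref{eq:sketch_fact}. The Kruskal-rank assumptions $\Rk^*\bm \Phi \geq r$ and $\Rk^*\bm \Psi \geq r$ say exactly that every selection of at most $r$ columns of $\bm \Phi$ (resp.\ $\bm \Psi$) is linearly independent, so for $s \leq r$ both $\bm \Phi_S$ and $\bm \Psi_S$ have full column rank. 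The rank characterization then gives $\Rk \bm B = s$ for every admissible $\bm x$.

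For the converse \ref{wthout_ovrl_1} $\Rightarrow$ \ref{wthout_ovrl_2}, the key is to extract the Khatri--Rao structure from the single-sparse case $s = 1$. Taking $\bm x = \bm e_n$, the matrix $\bm B$ is built from the $k$ length-$\ell$ blocks of the single column $\bm a_n$, and the hypothesis forces $\Rk \bm B = 1$. A nonzero rank-one matrix has all its columns proportional to a common vector $\bm \psi_n \in \C^\ell$, i.e.\ the $i$-th block equals $\Phi_{i,n}\,\bm \psi_n$ for scalars $\Phi_{i,n}$, which is equivalent to $\bm a_n = \bm \phi_n \otimes \bm \psi_n$. Collecting these per-column factorizations produces matrices $\bm \Phi, \bm \Psi$ with $\bm A = \bm \Phi \diamond \bm \Psi$ (the case $s=1$ also rules out $\bm a_n = \bm 0$, as rank $1$ excludes rank $0$). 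Once this structure is in hand, I would reuse the rank characterization: for every $S$ with $\Abs{S} = s \leq r$, statement \ref{wthout_ovrl_1} gives $\Rk \bm B = s$, which by \eqref{eq:sketch_fact} forces both $\bm \Phi_S$ and $\bm \Psi_S$ to have full column rank. Since this holds for all such $S$, every set of at most $r$ columns of $\bm \Phi$ and of $\bm \Psi$ is independent, i.e.\ $\Rk^*\bm \Phi \geq r$ and $\Rk^*\bm \Psi \geq r$.

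I expect the main obstacle to be the converse structural step: arguing cleanly that the rank-one condition across all single-sparse vectors is equivalent to a per-column Kronecker factorization, and that these local factorizations assemble into global matrices $\bm \Phi$ and $\bm \Psi$. In particular, the per-column scaling ambiguity $\bm \phi_n \otimes \bm \psi_n = (\alpha \bm \phi_n)\otimes(\alpha^{-1}\bm \psi_n)$ must be acknowledged, though any consistent choice of representative suffices. A secondary point requiring care is the exact rank identity behind \eqref{eq:sketch_fact}: one must verify that $\Rk\br{\bm \Psi_S \Diag(\bm x_S)\bm \Phi_S^\trans} = s$ holds if and only if both outer factors have rank $s$, using invertibility of $\Diag(\bm x_S)$ together with the submultiplicativity of rank and the equality case that applies when one factor has full column rank.
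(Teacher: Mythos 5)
Your proposal is correct and follows essentially the same route as the paper's proof: extracting the Khatri--Rao structure from the $1$-sparse case, using the factorization $\bm B = \bm \Psi_S \Diag(\bm x_S)\bm \Phi_S^\trans$ for both implications, and reading the Kruskal-rank conditions off the rank of that product (the paper phrases this last step as a contradiction with a rank-deficient column selection $T$, whereas you argue it directly, but the content is identical). Your explicit remarks on the scaling ambiguity and on excluding zero columns are minor refinements the paper leaves implicit.
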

\begin{proof}
Assume \ref{wthout_ovrl_1} holds and consider an arbitrary but fixed $1$-sparse $\bm x$ with $\Supp(\bm x) = \{q\}$ for some $q \in [N]$. Now \ref{wthout_ovrl_1} implies that $\Rk \bm B = 1$. On the other hand, denoting the $q$-th column of $\bm A_i$ by $\bm a_{i,q}$, we have
	\[\bm b_i = \bm A_i \cdot \bm x = \bm a_{i,q} \cdot x_q \Text{for} i \in [k].\]
Because all $\bm b_i$ must be linearly dependent, we get 
	\[\bm a_{i,q} = \varphi_{i,q} \cdot \bm\psi_q \Text{for} i \in [k],\]
for some non-zero $\bm \psi_q \in \C^\ell$ and some $\varphi_{i,q} \in \C$. In terms of the columns of $\bm A$ this means $\bm a_q = \Vectorize(\bm \psi_q \bm \varphi_q^\trans) = \bm{\varphi}_q \otimes \bm{\psi}_q$ for $\bm \varphi_q = (\varphi_{1,q} , \dots , \varphi_{k,q})^\trans \in \C^k$. Since $q$ was arbitrary we conclude that $\bm A = \bm \Phi \diamond \bm \Psi$, where $\bm \Phi = (\bm \varphi_i)_{i \in [N]} \in \C^{k \times N}$ and  $\bm \Psi = (\bm \psi_i)_{i \in [N]} \in \C^{\ell \times N}$. 

Now consider any $s$-sparse $\bm x$ for $s \leq r$. Then from the structure of $\bm A$ and $\bm B$, we get that
	\Eq{\bm B = \bm \Psi \Diag(\bm x) \bm \Phi^\trans,}{struct_B}
since $p = \ell$. This can also be restricted to $S = \Supp(\bm x)$ and then reads as
	\[\bm B = \bm \Psi_S \Diag(\bm x_S) \bm \Phi_S^\trans.\]
Now seeking a contradiction, assume that $\Rk^* \bm \Phi < r$. This means that there is a set $T \subset [N]$ of size $r$ such that $\bm \Phi_T$ is rank deficient. Now set $\bm x = \sum_{i \in T} \bm e_i$, which yields $\bm B = \bm \Psi_T \bm \Phi_T^\trans$. The fact that $\bm B$ has rank strictly smaller than $r$ implies the desired contradiction. An analogue argument holds if we assume $\Rk^* \bm \Psi < r$.

For the opposite direction, assume \ref{wthout_ovrl_2} and again consider equation (\ref{eq:struct_B}) for some $s$-sparse $\bm x$ with support $S$. Knowing that $\bm \Phi$ and $\bm \Psi$ have Kruskal rank at least $s$, we also have $\Rk \left[\bm \Psi_S \Diag(\bm x_S) \right] = s$ and $\Rk \bm \Phi_S = s$. This implies that $\Rk \bm B = s$, because the two matrices are a rank factorization of $\bm B$.
\end{proof}

In other words, Theorem~\ref{without_overlap} states that in the case of non-overlapping blocks, we can obtain the sparsity order $K$ 
from the rank of $\bm{B} \in \C^{\ell \times k}$ if and only if the measurement matrix $\bm{A}$ possesses a Khatri-Rao structure.
Moreover, the theorem also shows that the highest sparsity order we can estimate is given by 
$K_{\rm max} = {\rm min}(\Rk^*(\bm{\Phi}), \Rk^*(\bm{\Psi})) \leq \min(k,\ell)$. Note that  in the case of non-overlapping blocks
we have $k \cdot \ell = m$ which means $K_{\rm max} \leq \min(k,\frac{m}{k}) \leq \sqrt{m}$ 
and we should choose $k$ and $\ell$ close to $\sqrt{m}$ to maximize $K_{\rm max}$.


\subsection{Sparsity order estimation for overlapping blocks}

As we have seen for non-overlapping blocks, if we let $\bm{A}$ be a Khatri-Rao product of two matrices of equal size, the maximum sparsity order that can be estimated is $K_{\rm max} \leqslant \sqrt{m}$. This bound is tight, iff $m$ is a square number $m = q^2$; then we can set $k = \ell = q$. If we let blocks overlap, the size of the matrix $\bm{B}$ grows, which allows to estimate larger sparsity orders $K_{\rm max}$. However, depending on the overlap, additional constraints on $\bm{A}$ have to be posed. This is formalized by the following theorem.

\begin{Thm}[\cite{roemer2014singsnapsoe}]\label{with_overlap}
	For $k$ overlapping blocks of length $\ell$, block advance $p$ and any $r \leqslant \Min(k,\ell)$ the following statements are equivalent:
	\enum{
		\item\label{wth_ovrl_1} For all $s \leqslant r$ and all $s$-sparse $\bm x$ we have $\Rk \bm B = s$.
		\item\label{wth_ovrl_2} $\bm A$ consists of the first $m$ rows of $\bm V \diamond \bm \Psi$ with $\bm \Psi \in \C^{p \times N}$ being arbitrary, $\bm V \in \C^{\lceil m/p \rceil \times N}$ being a Vandermonde matrix such that the matrix $\hat{\bm \Psi}$ originating from restricting $\bm V_{\lceil \ell/p \rceil} \diamond \bm \Psi$ to its first $\ell$ rows has Kruskal rank $r$ and the matrix $\bm V$ restricted to its first $k$ rows also has Kruskal rank $r$.
		
	}
\end{Thm}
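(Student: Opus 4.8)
The plan is to follow the template of the proof of \Cref{without_overlap}: extract the structure of $\bm A$ from the $1$-sparse case, and then reduce the general rank statement to a rank factorization of $\bm B$. The genuinely new feature is that overlapping blocks \emph{share} entries of $\bm A$, and it is precisely the consistency forced on these shared entries that will produce the Vandermonde factor $\bm V$ that was absent in the non-overlapping case.

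First I would prove $\ref{wth_ovrl_1}\Rightarrow\ref{wth_ovrl_2}$. Fix a $1$-sparse $\bm x$ with $\Supp(\bm x)=\{q\}$ and write $\bm a_{i,q}$ for the $q$-th column of $\bm A_i$, so that $\bm b_i = x_q\,\bm a_{i,q}$. Since $\ref{wth_ovrl_1}$ gives $\Rk\bm B = 1$, all blocks are collinear, i.e.\ $\bm a_{i,q}=c_{i,q}\,\bm\psi_q$ for a fixed nonzero $\bm\psi_q\in\C^{\ell}$ and scalars $c_{i,q}$. Denoting by $\bm a_q\in\C^m$ the full $q$-th column of $\bm A$, the block advance $p$ means $(\bm a_q)_{j+p(i-1)}=c_{i,q}(\bm\psi_q)_j$ for all $j\in[\ell]$, $i\in[k]$. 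For an index $n$ lying in the overlap of blocks $i$ and $i+1$ this gives two expressions for the same entry, hence $c_{i,q}(\bm\psi_q)_j=c_{i+1,q}(\bm\psi_q)_{j-p}$ for all $p<j\le\ell$. The crucial observation is that $c_{i+1,q}/c_{i,q}$ does not depend on $j$, so the ratio $(\bm\psi_q)_j/(\bm\psi_q)_{j-p}$ equals a constant $\rho_q$ independent of both $i$ and $j$.

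This single constant-ratio recurrence is what encodes the Vandermonde structure. Splitting $[\ell]$ into residue classes modulo $p$, the relation $(\bm\psi_q)_{j}=\rho_q(\bm\psi_q)_{j-p}$ turns each class into a geometric progression of ratio $\rho_q$, while $c_{i,q}=c_{1,q}\rho_q^{\,i-1}$. Re-indexing an arbitrary position as $n=t+p\cdot s$ with $t\in[p]$ then yields $(\bm a_q)_n = \big(c_{1,q}(\bm\psi_q)_t\big)\,\rho_q^{\,s}$, which is exactly the $n$-th entry of $\bm v_q\otimes\tilde{\bm\psi}_q$ with the Vandermonde vector $\bm v_q=(1,\rho_q,\rho_q^2,\dots)^{\trans}$ and $\tilde{\bm\psi}_q=c_{1,q}\big((\bm\psi_q)_1,\dots,(\bm\psi_q)_p\big)^{\trans}\in\C^{p}$. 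Collecting the columns $q\in[N]$ shows that $\bm A$ is the first $m$ rows of $\bm V\diamond\bm\Psi$ with $\bm V$ Vandermonde of node $\rho_q$ in column $q$ and $\bm\Psi=(\tilde{\bm\psi}_q)_{q}\in\C^{p\times N}$, as claimed. Restricting to $S=\Supp(\bm x)$ and using the resulting $\bm a_{i,q}=\rho_q^{\,i-1}\hat{\bm\psi}_q$, where $\hat{\bm\psi}_q$ is the $q$-th column of $\hat{\bm\Psi}$ (the first $\ell$ rows of $\bm V_{\lceil\ell/p\rceil}\diamond\bm\Psi$), the $i$-th column of $\bm B$ becomes $\hat{\bm\Psi}_S\Diag(\bm\rho_S^{\,i-1})\bm x_S$ with $\bm\rho=(\rho_1,\dots,\rho_N)^{\trans}$, so that
\begin{align}
\bm B = \hat{\bm\Psi}_S\,\Diag(\bm x_S)\,\big(\bm V_{[k]}\big)_S^{\trans},\nonumber
\end{align}
where $\bm V_{[k]}$ is the first $k$ rows of $\bm V$. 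As in \Cref{without_overlap}, if an $r$-subset $T$ made $\hat{\bm\Psi}$ or $\bm V_{[k]}$ rank-deficient, taking $\bm x=\sum_{i\in T}\bm e_i$ would force $\Rk\bm B<r$ via $\Rk(MN)\le\Min(\Rk M,\Rk N)$, contradicting $\ref{wth_ovrl_1}$; hence both factors have Kruskal rank $r$. For the converse $\ref{wth_ovrl_2}\Rightarrow\ref{wth_ovrl_1}$, when $s\le r$ the two outer factors have full column rank $s$ and $\Diag(\bm x_S)$ is invertible, so the displayed product has rank exactly $s$.

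The hard part is the passage in the third paragraph. In \Cref{without_overlap} disjoint blocks made collinearity immediately separable into a plain Kronecker form, whereas here the overlap couples consecutive blocks, and one must track the shared entries carefully: through the residue classes modulo $p$, through the truncation to the first $m$ (respectively $\ell$) rows that is reflected in the ceilings $\lceil m/p\rceil$ and $\lceil\ell/p\rceil$, and through the degenerate situations where some $(\bm\psi_q)_j$ or $\rho_q$ vanishes (which must be excluded, relying on $\Rk\bm B=1\neq 0$ and on genuine overlap $\ell-p\ge 1$ together with $k\ge 2$, so that at least one shared entry pins down the ratio). Obtaining the constant-ratio recurrence cleanly, and hence the Vandermonde node, is the decisive step; everything else is the same bookkeeping as in the non-overlapping theorem.
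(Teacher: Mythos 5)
Your proposal is correct and follows essentially the same route as the paper's proof: extract collinearity of the blocks from the $1$-sparse case, use the shared entries of overlapping blocks to force a single ratio $\rho_q$ (the paper's $z_q$) that yields the Vandermonde factor, then read off the Kruskal-rank conditions and the converse from the factorization $\bm B = \hat{\bm\Psi}_S\Diag(\bm x_S)(\bm V_k)_S^{\trans}$. Your residue-class-mod-$p$ bookkeeping just makes explicit what the paper compresses into ``we can only choose $z_q$ and the first $p$ elements of $\bm a_{1,q}$ freely.''
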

\begin{proof}
Given \ref{wth_ovrl_1}, we again consider some $\bm x \in \C^N$ with $\Supp(\bm x) = \{ q \}$ such that
	\[\bm B = x_q [\bm a_{1,q},\dots,\bm a_{k,q}].\]
Since $\bm B$ has rank $1$, we know that its columns, which are also the columns of the blocks $\bm A_i$, fulfill
\begin{align}
	\bm a_{i,q} = z_{i,q} \bm a_{i-1,q} \Text{for} i \in \{2,\dots,k\},\label{scaling_1} \\
	\bm a_{i+1,q} = z_{i+1,q} \bm a_{i,q} \Text{for} i \in \{1, \dots, k-1\}.\label{scaling_2} \\
\intertext{for appropriate $z_{i,q}$ and still denoting the $q$-th column of $\bm A_i$ by $\bm a_{i,q}$. Moreover, because of the overlap we also get}
	(\bm a_{i,q})_j = (\bm a_{i-1,q})_{p+j} \Text{for} j \in [\ell-p] + 1 \label{overlap_1}\\
	(\bm a_{i+1,q})_j = (\bm a_{i,q})_{p+j} \Text{for} j \in [\ell-p] \label{overlap_2}\\
\intertext{and $i \in [k-1]$. Combining \eqref{scaling_1},\eqref{scaling_2} and \eqref{overlap_2} from above yields}
	z_{i+1,q} \left( \bm a_{i,q} \right)_{j} = z_{i,q} \left( \bm a_{i-1,q} \right)_{p+j} \Text{for} j \in [\ell - p] \\
\intertext{and $i \in [k-1]$. From \eqref{overlap_1} we follow, that his has to be true for all $j \in [\ell - p]$, so}
	z_{i,q} = z_{i+1,q} = z_q \Text{for} i \in [k-1]
\end{align}
for some $z_q$ independent of $i$. Summarizing, we get
	\[\bm a_{i,q} = z_q  \bm a_{i-1,q} = z_q^2 \bm a_{i-2,q} = \dots = z_q^{i-1} \bm a_{1,q}.\]
This means that we can only choose $z_q$ and because of the overlap the first $p$ elements of $\bm a_{1,q}$ freely. Now setting $\left(\bm a_{1,q}\right)_{[p]} = \bm \psi_q$ with $\bm \psi_q \in \C^{p}$ then implies the Khatri-Rao structure $\hat{\bm A} = \bm V \diamond \bm \Psi$, where we get $\bm A$ by restricting $\hat{\bm A}$ to its first $m$ rows and $\bm \Psi = [\bm \psi_1, \dots, \bm \psi_N]$. The Vandermonde structure of the first factor reads as
	\[\bm V = \begin{bmatrix} 1 & \dots & 1 \\ z_{1} & \dots & z_{N} \\ \vdots & \ddots & \vdots \\ z^{\lceil m/p \rceil - 1}_{1} & \dots & z^{\lceil m/p \rceil - 1}_{N} \end{bmatrix},\]
if we repeat the reasoning above for any $q \in [N]$. 

For the Kruskal ranks of the involved matrices we deduce from the Khatri-Rao structure and the overlap that
	\[\bm B = \hat{\bm \Psi} \cdot \Diag(\bm x) \cdot \bm V_k^\trans,\]
where $\hat{\bm \Psi}$ consists of the first $\ell$ rows of $\bm V_{\lceil \ell/p \rceil} \diamond \bm \Psi$ and $\bm V_{\lceil \ell/p \rceil}$ and $\bm V_k$ are the restrictions of $\bm V$ to its first $\lceil \ell/p \rceil$ and $k$ rows respectively. Now consider the above equation for any $s$-sparse $\bm x$. As in the last proof, we argue that the factors involved in the Khatri-Rao product must fulfill the Kruskal rank conditions imposed on them in the statement of \ref{wth_ovrl_2} for $\bm B$ to have rank $s$.

Now given \ref{wth_ovrl_2}, we proceed similarly to the proof of Theorem \ref{without_overlap}, but this time, we get for an $s$-sparse $\bm x$ with $s \leqslant r$ that
	\[\bm B = \hat{\bm \Psi} \cdot \Diag(\bm x) \cdot \bm V_k^\trans,\]
where $\bm V_{\lceil \ell/p \rceil}$ and $\bm V_k$ are the restrictions of $\bm V$ to its first $\lceil \ell/p \rceil$ and $k$ rows respectively. Because of the conditions on $\hat{\bm \Psi}$ and $\bm V$ we deduce as before that $\bm B$ has rank $s$. 
\end{proof}

The assumptions in \ref{wth_ovrl_2} in Theorem \ref{with_overlap} on $\bm V$ and $\bm \Psi$ seem rather strict and technical. In the following we will show, that they are fulfilled given some simple criteria. If we consider the Vandermonde matrix $\bm V$ it is enough to require that the $z_{1}, \dots, z_N$ are pairwise distinct for $\bm V$ to have maximal Kruskal rank. For a square Vandermonde matrix $\bm V \in \C^{n \times n}$ with generating elements $z_1 , \dots , z_n$ reads as
	\[\det(\bm V) = \Prod{1 \leqslant i < j \leqslant n}{}{(z_i - z_j)}.\]
This means that a Vandermonde matrix with more rows than columns and pairwise distinct generating elements always has full column rank. Moreover, if the matrix has more columns than rows, and pairwise distinct generating elements, every square submatrix has full rank and as such the whole matrix has full Kruskal rank.

Now turning to the arbitrary factor $\bm \Psi \in \C^{p \times N}$, we can choose it in general position and in the following we show that this is sufficient for the requirements of Theorem \ref{with_overlap}. To this end, fix some $r$-sparse $\bm x \in \C^N$ with support set $R$, a Vandermonde matrix $\bm V \in \C^{m \times N}$ which has full Kruskal rank and consider the Khatri-Rao product $\bm V \diamond \bm \Psi \in \C^{m \cdot p \times N}$. Now assume that
\[\bm m = \left(\bm V \diamond \bm \Psi \right) \cdot \bm x = \bm 0.\]
If we reshape the vector $\bm m$ according to the block construction in Theorem \ref{with_overlap} to
\[\bm M = \bm \Psi \Diag(\bm x) \bm V^\trans = \widetilde{\bm \Psi} \bm V^\trans,\]
which is a rank decomposition of $\bm M$ and this implies that for non-zero $\bm x$ the matrix $\bm M$ is non-zero. 

Note that the special case $p = 1$ in Theorem~\ref{with_overlap} implies that the entire sensing matrix $\bm A$ is a Vandermonde matrix with rescaled columns. In the context of harmonic retrieval, the mapping from $\bm b$ to $\bm B$ is also known as spatial smoothing~\cite{shan1985spatialsmooth} and is applied as a preprocessing step for subspace-based estimators in order to decorrelate coherent signals.

\begin{xremark}\label{parameter_choice}
Regarding the choice of the parameters we observe that $K_{\rm max} \leq \min(k,\ell)$ where $k$ and $\ell$ satisfy $(k-1)\cdot p + \ell = m$. Therefore, for a given block advance $p$, the number of blocks $k$ is equal to $k=\frac{m-\ell}{p}+1$. This means that to maximize $K_{\rm max}$ we should choose $\ell$ as the closest integer to $\frac{m+p}{p+1}$, which leads to $k\approx \ell$. Obviously, larger amounts of overlap (corresponding to smaller values of $p$) lead to a higher maximum sparsity order $K_{\rm max}$ where the maximum overlap case $p=1$ corresponds to $K_{\rm max} = \lfloor\frac{m+1}{2}\rfloor$. This shows that there is a fundamental tradeoff between the sparsity order estimation (SOE) stage and the sparse recovery stage: while a larger amount of overlaps improves the SOE capability, it leads to a more rigidly structured measurement matrix with a higher coherence, which is detrimental to the sparse recovery step. The achievable coherence is analyzed in more depth in Section~\ref{sec:analysis}.
\end{xremark}

\subsection{Sparsity Order Estimation in the presence of noise} \label{sec:SSR_noisy}

In the presence of additive, say Gaussian, noise as in equation (\ref{orig_model}) the entries of the matrix $\bm B$ are disturbed with a Gaussian noise matrix $\bm N$ as well, i.e.
\begin{equation}\label{eqn:b_noisy}
\hat{\bm B} = \bm B + \bm N.
\end{equation}
It is easy to see that the rank of $\hat{\bm{B}}$ is full with probability $1$. It is still possible to determine the ``effective'' rank for a model like \eqref{eqn:b_noisy} if the statistics of the additive noise are known.
Interestingly, although $\bm N$ is a reshaped version of the vector $\bm n$ in (\ref{orig_model}) with possibly the same noise sample at multiple positions (depending on the overlap), we can show that if the noise samples in $\bm{n}$ are i.i.d., the noise matrix $\bm{N}$ 
is ``white'' in the sense that $\mathbb{E}\left(\bm{N} \bm{N}^\herm\right) = C\cdot \bm{I}_\ell$.
To this end, let us assume the elements in $\bm{n}$ have zero mean and variance one
and let us define the selection matrices 
%
\[\bm{J}_i = \begin{bmatrix} \bm 0_{\ell \times p(i-1)} , & \bm I_{\ell \times \ell} , & \bm 0_{\ell \times (N - \ell - p(i-1))} \end{bmatrix} \in \R^{\ell \times m},\]
%
which satisfies $\bm{J}_i \cdot \bm{J}_i^\herm = \bm{I}_{\ell\times \ell}$. 
Then one can rewrite the measurement and noise vectors as
\[\bm b_i = \bm{J}_i \cdot \bm{b} \Text{and} \bm n_i = \bm{J}_i \cdot \bm n,\]
where $i=[k]$. Now we calculate
\begin{align}
 \mathbb{E} \left( \bm N \cdot \bm N^\herm \right) 
  & = \mathbb{E} \left( \sum_{i = 1}^k\bm{J}_i \cdot \bm n \cdot (\bm{J}_i \cdot \bm n)^\herm\right) \notag \\
	& = \sum_{i = 1}^k\bm{J}_i \cdot \mathbb{E} \left(\bm n \cdot \bm{n}^\herm \right) \cdot \bm{J}_i^\herm \notag \\
	& = \sum_{i=1}^k\bm{J}_i \cdot \bm{J}_i^\herm  =k \cdot \bm{I}_\ell.
\end{align}
	%

As we have shown, we need to determine the rank of the matrix $\bm{B}$
in presence of noise according to \eqref{eqn:b_noisy} (where in our case, the additive
noise is white). This task is known as model order selection and a number
of efficient algorithms are available.
Examples include information-theoretic criteria such as MDL, AIC, BIC (see \cite{stoica2004model} for a survey), the \emph{Eigenvalue Threshold Test} (ETT)\cite{lavrenko2014ettsoe} or the \emph{Exponential Fitting Test} (EFT) presented in \cite{quinlan2007eft}. We make use of the latter for our numerical experiments in section~\ref{sec:numerics}, because it is derived specifically for models
disturbed by additive white Gaussian measurement noise.
\section{Sensing matrix design}\label{sec:analysis}
As we have seen in the last section, sparsity order estimation can be achieved via
rank estimation of a matrix obtained by rearranging the measurement vector, provided
that the sensing matrix obeys certain structural constraints. Firstly, it has to be a Khatri-Rao
product of two smaller matrices and secondly, in the case of overlapping blocks,
one of the blocks has to be a Vandermonde matrix.

In this section, we analyze the implications this 
particular sensing matrix structure has
for the design of the measurement matrix, with
a particular focus on the Vandermonde matrices.

\subsection{Khatri-Rao structured measurement matrix optimization}\label{SOE_SSR}

As we have seen in Theorem~\ref{without_overlap} in the case of no overlap one is able to recover the sparsity order by using a Khatri-Rao structured sensing matrix, whose factors have to fulfill the condition of having a high Kruskal rank. 
This condition is difficult to optimize for, since the Kruskal rank is hard to compute.
However, the following inequality links it to the coherence
\begin{equation}\label{coherence}
\mu(\bm A) = \Max\limits_{1 \leqslant i < j \leqslant m} \frac{\ScPr{\bm a_i}{\bm a_j}}{\Norm{\bm a_i}\Norm{\bm a_j}},
\end{equation}
namely for any matrix $\bm{X} \in \C^{p \times q}$, $p<q$ we have
\[\Rk{}^* \bm X \geq \frac{1}{\mu(\bm X)}.\]
Therefore, for the Kruskal rank of the factors to be high, we aim for matrices with low coherence. 
These are desirable also from the viewpoint of the subsequent sparse recovery step,
which in general works better the lower the coherence of the sensing matrix is. In fact,
it has been shown that the under-determined system of equations $\bm{b} = \bm{A} \cdot \bm{s}$ 
has a unique solution for $K$-sparse vectors $\bm{s}$ if $K < K_{\rm max}$ with
\begin{align}
K_{\rm max} = \frac{1}{2}\left(1+\frac{1}{\mu(\bm{A})}\right). \label{eqn:kmax_ssr}
\end{align}
Note that Basis Pursuit (BP) as well as Orthogonal Matching Pursuit (OMP) are able to achieve this bound in the sense that they can recover any $K$-sparse $\bm{s}$ in the noise-free case as long as $K<K_{\rm max}$.
For this reason we minimize the coherence of the measurement matrix $\bm{A}$. 

Since $\bm{A}$ is a Khatri-Rao product, it might seem advantageous to take this structure into account when optimizing the coherence of the sensing matrix using the two factors.
In the real valued setting, one can use packing arguments in projective matrix spaces of rank $1$
to optimize the coherence of Khatri-Rao products.
As a result one obtains that the best coherence of a Khatri-Rao product $\bm{A} = \bm{\Phi} \diamond \bm{\Psi} \in \R^{m \times N}$
is achieved
if $\bm{\Phi} \in \R^{m_1 \times N}$ and $\bm{\Psi} \in \R^{m_2 \times N}$ 
contain repeated columns according to $\bm{\Phi} = \hat{\bm{\Phi}} \otimes \bm{1}_{1\times N_2}$
and $\bm{\Psi} = \bm{1}_{1 \times N_1} \otimes \hat{\bm{\Psi}}$
where $\bm{\hat{\Phi}} \in \real^{m_1 \times N_1}$, $\bm{\hat{\Psi}} \in \real^{m_2 \times N_2}$
with $m = m_1 \cdot m_2$ and $N = N_1 \cdot N_2$.
In this case we get
$\bm{A} = \bm{\Phi} \diamond \bm{\Psi} = \hat{\bm \Phi} \otimes \hat{\bm \Psi}$ so that $\bm{A}$
is actually Kronecker structured.
Due to the repeating columns, this implies that both $\bm{\Phi}$ and $\bm{\Psi}$ have Kruskal rank one,
i.e. Khatri-Rao products of minimal coherence have factors that are not suitable
for sparsity order estimation. Instead, one should optimize the coherence of the 
factors $\bm{\Phi}$ and $\bm{\Psi}$ independently.

This also has a positive effect on the coherence of $\bm{A}$ since we have the trivial upper bound
%
\[\mu(\bm A) = \mu(\bm \Phi \diamond \bm \Psi) \leq \mu(\bm \Phi \otimes \bm \Psi) = \mu(\bm \Phi) \mu(\bm \Psi).\]
%
Therefore, the remainder of this section discusses the minimization of coherence of the matrices $\bm{\Phi}$ and $\bm{\Psi}$.

In the case of no overlap these matrices can be arbitrary as there are no further structural requirements from the sparsity
order estimation method. 
We can therefore apply any method for coherence minimization from the literature.
In fact, there are already many iterative construction methods like in~\cite{tropp2008grasspack} and in coding theory this problem arises in constructing good spherical codes. An overview of theory and algorithms can be found in~\cite{zoernlein2015sphercode}.

This means that the only regime where we can improve upon existing results is in the case of overlap during the construction of the matrix $\bm B$ as in Theorem~\ref{with_overlap}. Here, one factor obeys a Vandermonde structure. As long as the generating elements of the matrix $\bm V$ are pairwise different it is well known that this matrix has maximal Kruskal rank. But since we also need a low coherence for efficient recovery to happen, the following section contains a more thorough analysis of this problem.

\subsection{Coherence minimization for Vandermonde matrices}

Vandermonde matrices are very rigidly structured, i.e. they can be generated from the knowledge of their first row.
\begin{Def}
	Let $\mathcal{V}^{n \times m}(\C)$ be the space of $n\times m$ Vandermonde matrices then it holds that the non-linear mapping
	\Eq{\bm \nu_n \D \C^m \rightarrow \mathcal{V}^{n \times m} \Text{with} \bm z 
		\mapsto \begin{pmatrix}
				z_1 & \dots & z_m \\
				z_1^2 & \dots & z_m^2 \\
				\vdots & \vdots & \vdots \\
				z_1^n & \dots & z_m^n 
				\end{pmatrix}
	}{vander_bij}
	is bijective. So whenever we think about a Vandermonde matrix $\bm V$ it suffices to consider its first row.
\end{Def}
In the following, we will call the $z_i$ the generating elements of $\bm V$. 

This strong requirement on the structure of elements in $\mathcal{V}^{n \times m}$ will allow us to derive an explicit term for the inner product of two columns of a Vandermonde matrix that does not involve any summation and just depends on the amplitudes and phases of the generating elements of the respective columns. This is key, because of the definition of the coherence $\mu( \bm A)$ of a matrix given in \eqref{coherence}.

Further on we will be using the lemma below a couple of times, which calculates the value of the discontinued geometric series.
\begin{Lem}
	 For any $q \in \C \setminus \{1\}$ it holds that
		\Eq{\Sum{k = 1}{n}{q^k} = \frac{q - q^{n+1}}{1 - q} = S_n(q)}{ab_sum}
	In the case $q = 1$ the above formula yields
		\Eq{\lim\limits_{q \rightarrow 1} S_n(q) = \lim\limits_{q \rightarrow 1} \frac{q - q^{n+1}}{1 - q} = \lim\limits_{q \rightarrow 1} \frac{1 - (n+1)q^{n}}{-1} = n.}{sum_lhosp}
\end{Lem}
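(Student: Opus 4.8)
The plan is to prove the closed form for $q \neq 1$ by the standard telescoping trick and then treat $q = 1$ as a removable singularity. First I would set $S = \Sum{k=1}{n}{q^k}$ and multiply both sides by $q$, which shifts the summation index so that $qS = \Sum{k=2}{n+1}{q^k}$. Subtracting the two expressions, every term with $2 \leqslant k \leqslant n$ cancels and only the boundary terms survive, giving $S - qS = q - q^{n+1}$. Since $q \neq 1$ we may divide by $1 - q$ to obtain $S = (q - q^{n+1})/(1-q) = S_n(q)$, which is precisely the first identity in \eqref{eq:ab_sum}.

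For the value at $q = 1$, I would note that both the numerator $q - q^{n+1}$ and the denominator $1 - q$ of $S_n$ vanish at $q = 1$, so the quotient is an indeterminate $0/0$ form. Applying L'Hôpital's rule — differentiating the numerator to get $1 - (n+1)q^n$ and the denominator to get $-1$ — and evaluating at $q = 1$ yields $(1 - (n+1))/(-1) = n$, which is exactly the chain of equalities recorded in \eqref{eq:sum_lhosp}.

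There is no genuine obstacle here, as this is simply the elementary finite geometric sum; the only point worth a word of care is the passage to $q = 1$. A cleaner alternative to L'Hôpital, which I would mention for rigour, is to observe that $1 - q$ divides $q - q^{n+1} = q\,(1 - q^n)$, because $1 - q \mid 1 - q^n$. Hence $S_n$ is in fact a polynomial in $q$ and therefore continuous on all of $\C$, so $\lim_{q \to 1} S_n(q)$ equals the value this polynomial takes at $q = 1$. By the first identity this continuous extension must agree with the sum evaluated there, namely $\Sum{k=1}{n}{1^k} = n$, confirming both the limit and that $S_n$ extends continuously across $q = 1$.
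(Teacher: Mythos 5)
Your proof is correct and, for the part the paper actually argues (the $q=1$ limit), takes exactly the same route: an application of L'H\^{o}pital's rule to the $0/0$ form, which is all the paper's one-line proof says. The telescoping derivation of the finite geometric sum and the remark that $S_n$ extends to a polynomial (so the limit can be read off by continuity without L'H\^{o}pital) are correct additions that the paper leaves implicit, but they do not change the essential argument.
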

\begin{proof}
	This follows from a straightforward application of L'H{\^o}pital's rule.
\end{proof}
The following theorem contains all information needed to derive an algorithm for Vandermonde matrices with low coherence and bounds thereof. We treat the general case for two arbitrary Vandermonde columns but also the special case where the first row of the matrix is a subset of the complex unit circle. Furthermore it defines upper and lower envelope functions (see Figure~\ref{vanderHullc1c2}) for the inner product of two Vandermonde columns depending only on the phase shift between two generating elements for fixed amplitudes.
\begin{Thm}\label{lambda_facts}
	If we consider two vectors \[\bm z_1 = (c_1 e^{i\phi_1},\dots,c_1^n e^{i n \phi_1})^\trans\] and \[\bm z_2 = (c_2 e^{i\phi_2},\dots,c_2^n e^{i n \phi_2})^\trans,\] where $c_1$, $c_2$ are positive real numbers and $\phi_1 \leqslant \phi_2$ with $\phi_1, \phi_2 \in [0,2\pi]$ and define the function 
		\Eq{\lambda(c_1,c_2,\phi) = \frac{\Abs{\ScPr{\bm z_1}{\bm z_2}}^2}{\Norm{\bm z_1}^2 \Norm{\bm z_2}^2}}{lambda_def}
	with $\phi = \phi_2 - \phi_1$, then the following statements hold.
	\enumrom{
		\item\label{fcts_lmb_gen}
			\[\lambda(c_1,c_2,\phi) = \dfrac{\dfrac{1 + c_1^{2n}c_2^{2n} - 2 c_1^n c_2^n \cos[n\phi]}{(1 - c_1^{2n})(1 - c_2^{2n})}}{\dfrac{1 + c_1^2c_2^2 - 2 c_1 c_2 \cos\phi}{(1-c_1^2)(1-c_2^2)}}.\]
		\item\label{fcts_lmb_11}
			\[\lambda(1,1,\phi) = \frac{\sin^2\left(\frac 12 n \phi\right)}{n^2 \sin^2\left(\frac 12 \phi\right)}.\]
		\item\label{fcts_lmb_sym} We have symmetry along the unit circle in $\C$. In other words
			\[\lambda(c_1,c_2,\phi) = \lambda(c_1^{-1},c_2^{-1},\phi) = \lambda(c_1,c_2,-\phi).\]
		\item\label{fcts_lmb_hulls} For $c_1 \neq 1$ and $c_2 \neq 1$ we can bound $\lambda$ by
			\[\kappa(c_1,c_2,\phi) \leqslant \lambda(c_1,c_2,\phi) \leqslant \eta(c_1,c_2,\phi),\]
		where $\kappa$ and $\eta$ are defined as
		\[\kappa(c_1,c_2,\phi) = \dfrac{\dfrac{(1 - c_1^n c_2^n)^2}{(1 - c_1^{2n})(1 - c_2^{2n})}}{\dfrac{1 + c_1^2c_2^2 - 2 c_1 c_2 \cos\phi}{(1-c_1^2)(1-c_2^2)}}, \]
		and
		\[\eta(c_1,c_2,\phi) = \begin{cases}
										\dfrac{1}{n^2\sin^2\left(\frac 12 \phi\right)} \Text{for} c_1 = c_2 = 1 \\
										\dfrac{\dfrac{(1 - c_1^n c_2^n)^2}{(1 - c_1^{2n})(1 - c_2^{2n})}}{\dfrac{1 + c_1^2c_2^2 - 2 c_1 c_2 \cos\phi}{(1-c_1^2)(1-c_2^2)}} \Text{otherwise.}
			                      \end{cases}
		\]
		Moreover, the bounds satisfy for $c_1,c_2 \in (0,1]$
			\begin{align*}
				\lambda(c_1,c_2,\phi) &= \kappa(c_1,c_2,\phi) \\ 
				&\Text{if and only if} \phi = k \cdot (2\pi)/n \\
				&\Text{for some} k \in \{0,\dots,n\}, \\
				\lambda(c_1,c_2,\phi) &= \eta(c_1,c_2,\phi) \\
				&\Text{if and only if} \phi = \pi/n + k \cdot (2\pi)/n \\
				&\Text{for some} k \in \{1,\dots,n-1\}, \\
				\kappa(c_1,c_2,\phi) &= \kappa(1/c_1,1/c_2,\phi) \\
				& \Text{for} \phi \in [0,2\pi], \\
				\eta(c_1,c_2,\phi) &= \eta(1/c_1,1/c_2,\phi) \\
				& \Text{for} \phi \in (0,2\pi).
			\end{align*}
		\item\label{fcts_lmb_limit} For any $c > 0$ and $c_1 > 0$ it holds that
			\[\lim\limits_{c_1 \rightarrow \infty} \lambda(c/c_1,c_1,\phi) = 0 \Text{for} \phi \in [0,2\pi].\]
		\item\label{fcts_lmb_roots} Roots of $\lambda(c_1,c_2,\cdot)$ exist if and only if $c_1c_2 = 1$ and these roots are $\phi = k \cdot (2\pi/n)$ for $k \in \{1,\dots,n-1\}$, i.e. there are exactly $n-1$ roots.
	}	
\end{Thm}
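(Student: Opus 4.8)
The whole theorem rests on one computation, namely statement \ref{fcts_lmb_gen}, from which everything else follows by specialisation, substitution, or elementary estimates. The plan is to first observe that the Hermitian inner product collapses to a single geometric series: with $q = c_1 c_2 e^{-i\phi}$ one has $\ScPr{\bm z_1}{\bm z_2} = \Sum{k=1}{n}{q^k} = S_n(q)$, while $\Norm{\bm z_1}^2 = S_n(c_1^2)$ and $\Norm{\bm z_2}^2 = S_n(c_2^2)$. Applying the preceding Lemma to each of these sums turns $\lambda$ into a ratio of closed-form expressions, and the only nontrivial piece is $\Abs{S_n(q)}^2$. Writing $S_n(q) = q(1-q^n)/(1-q)$ and using $\Abs{1-q^n}^2 = 1 - 2\Abs{q}^n\cos(n\phi) + \Abs{q}^{2n}$ together with $\Abs{q} = c_1 c_2$, the factor $\Abs{q}^2 = c_1^2 c_2^2$ cancels against the $c_1^2 c_2^2$ produced by $S_n(c_1^2)S_n(c_2^2)$, and after collecting terms one obtains exactly the displayed formula.

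Statements \ref{fcts_lmb_11} and \ref{fcts_lmb_sym} are then immediate. For \ref{fcts_lmb_11} I would not use the general formula (which is $0/0$ at $c_1=c_2=1$) but argue directly: the norms equal $n$ and $\Abs{S_n(e^{-i\phi})}^2 = (1-\cos n\phi)/(1-\cos\phi)$, whence the half-angle identity $1-\cos\theta = 2\sin^2(\theta/2)$ gives the Fej\'er-kernel form; the same value also arises as the $c_1,c_2 \to 1$ limit of \ref{fcts_lmb_gen} via the Lemma. For \ref{fcts_lmb_sym}, invariance under $\phi \mapsto -\phi$ is just evenness of $\cos(\cdot)$ and $\cos(n\,\cdot)$, while invariance under $(c_1,c_2)\mapsto(c_1^{-1},c_2^{-1})$ follows by multiplying numerator and denominator of each of the two constituent fractions by $c_1^{2n}c_2^{2n}$ resp.\ $c_1^{2}c_2^{2}$, which leaves every factor unchanged.

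The substantive part is \ref{fcts_lmb_hulls}. Here I would fix $c_1,c_2$ and read \ref{fcts_lmb_gen} as a function of $\phi$: the denominator fraction varies slowly (through $\cos\phi$) and the numerator fraction contains the single fast term $\cos(n\phi)$. Since $c_1^n c_2^n > 0$, the numerator $1 + c_1^{2n}c_2^{2n} - 2c_1^n c_2^n\cos(n\phi)$ is monotone in $\cos(n\phi)$ and ranges between $(1-c_1^n c_2^n)^2$ at $\cos(n\phi)=1$ and $(1+c_1^n c_2^n)^2$ at $\cos(n\phi)=-1$; freezing $\cos(n\phi)$ at these extreme values while retaining the $\phi$-dependent denominator produces the lower and upper envelopes $\kappa$ and $\eta$. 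To turn a bound on the numerator into a bound on $\lambda$ one needs the sign of $(1-c_1^{2n})(1-c_2^{2n})$ and positivity of the denominator fraction; both are clean on $(0,1)$, where every factor $1-c_i^2, 1-c_i^{2n}$ is positive and $1+c_1^2c_2^2 - 2c_1c_2\cos\phi \ge (1-c_1c_2)^2 > 0$. The case $c_i > 1$ is reduced to this one by the symmetry just established, which is exactly why the invariances $\kappa(c_1,c_2,\phi)=\kappa(1/c_1,1/c_2,\phi)$ and $\eta(c_1,c_2,\phi)=\eta(1/c_1,1/c_2,\phi)$ are recorded. Equality $\lambda=\kappa$ (resp.\ $\lambda=\eta$) holds precisely when $\cos(n\phi)$ attains $+1$ (resp.\ $-1$), i.e.\ at $\phi = 2\pi k/n$ (resp.\ $\phi = \pi/n + 2\pi k/n$), yielding the stated phase grids.

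Finally, \ref{fcts_lmb_limit} and \ref{fcts_lmb_roots} are short. For the limit, set $a = c/c_1$, $b = c_1$ so that $ab = c$ is fixed; the numerator fraction of \ref{fcts_lmb_gen} then becomes constant in $c_1$, while the remaining factor $(1-a^2)(1-b^2)/((1-a^{2n})(1-b^{2n}))$ behaves like $b^{2-2n}$ as $b \to \infty$ and hence tends to $0$ for $n \ge 2$. For the roots I would return to the geometric-series form: $\lambda = 0$ iff $S_n(q) = 0$ iff $q$ is an $n$-th root of unity different from $1$, where $q = c_1 c_2 e^{-i\phi}$. The modulus condition $\Abs{q}^n = 1$ forces $c_1 c_2 = 1$, and then $q^n = 1$, $q \neq 1$ gives $\phi = 2\pi k/n$ for $k \in \{1,\dots,n-1\}$, exactly $n-1$ values; when $c_1 c_2 \neq 1$ one has $\Abs{q}\neq 1$, so $S_n(q)\neq 0$ and no roots exist. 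The only point needing care is $\phi=0$ with $c_1c_2=1$, where \ref{fcts_lmb_gen} is $0/0$; evaluating the definition directly there gives $\lambda = n^2/(S_n(c_1^2)S_n(c_1^{-2})) > 0$, confirming $\phi=0$ is not a root. The hard part is clearly \ref{fcts_lmb_hulls}: the estimate itself is easy once the sign bookkeeping is in place, but pinning down the exact equality sets and making the envelope argument valid for all admissible $(c_1,c_2)$, rather than only on $(0,1)$, relies on the symmetry relations and on correctly tracking which branch of the definition of $\eta$ is active.
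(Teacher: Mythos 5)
Your proposal is correct and follows essentially the same route as the paper: everything is reduced to the closed form of the geometric series $\sum_{k=1}^n (c_1c_2 e^{-i\phi})^k$, with the envelopes obtained by freezing $\cos(n\phi)$ at $\pm 1$, the symmetries by rescaling powers of $c_1,c_2$, and the limit and root statements read off from the resulting expression. If anything, you are more careful than the paper's own proof on the sign bookkeeping in \ref{fcts_lmb_hulls} and on the $\phi=0$, $c_1c_2=1$ edge case in \ref{fcts_lmb_roots} (where your roots-of-unity characterization of $S_n(q)=0$ is a clean equivalent of the paper's $\sin$/$\cos$ system).
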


\linePlot{$\phi$}{$ $}{
ymax=1.1
}{
\addplot+[mark=,densely dotted] table [col sep=comma,y=x1]{matlab/data/vanderHull2_5.csv};
\addplot+[mark=,densely dotted] table [col sep=comma,y=x3,x=phi]{matlab/data/vanderHull2_5.csv};
\addplot+[ultra thick,mark=] table [col sep=comma,y=x2]{matlab/data/vanderHull2_5.csv};
}{
lower envelope $\eta$, upper envelope $\kappa$,$\lambda$
}
{The envelope curves for $\lambda(c_1,c_2,\cdot)$ for the case $c_1 = c_2^{-1} = 1.15$ and $n=5$.}{vanderHullc1c2}

\begin{proof}~\\
	\enumrom{
		\item We calculate the inner product of $\bm z_1$ and $\bm z_2$ with the law of cosines and the formula in \eqref{eq:ab_sum}, which yields
			\begin{align*}
				\Abs{\ScPr{\bm z_1}{\bm z_2}}^2 &= \Abs{\Sum{k=1}{n}{c_1^k e^{ik\phi_1}\cdot c_2^k e^{-ik\phi_2}}}^2\\
				&= \Abs{\Sum{k=1}{n}{(c_1c_2)^ke^{ik\phi}}}^2 \\
				&= \Abs{\dfrac{c_1 c_2 e^{i \phi} - (c_1 c_2)^{n+1}e^{i \phi(n+1)} }{1 - c_1 c_2 e ^{i\phi}}}^2 \\
				&= c_1^2c_2^2 \frac{1 + c_1^{2n}c_2^{2n} - 2 c_1^n c_2^n \cos(n\phi)}{1 + c_1^2c_2^2 - 2 c_1 c_2 \cos\phi}
			\end{align*}
		In this case we would have to consider L'H{\^o}pital's rule for $c_1 c_2 = 1$ and $\phi = 0$, which would yield that $c_1 c_2 e^{i \phi} = 1$ and thus $\Abs{\ScPr{\bm z_1}{\bm z_2}}^2 = n^2$. Because we want to derive an expression for the coherence of a matrix, we still need to divide by the norm of the vectors $\bm z_1$ and $\bm z_2$. Now using \eqref{eq:ab_sum} again we get
			\begin{align*}
					\lambda(c_1,c_2,\phi) &= \frac{\Abs{\ScPr{\bm z_1}{\bm z_2}}^2}{\Norm{\bm z_1}^2 \Norm{\bm z_2}^2} \\
					&= \dfrac{\dfrac{1 + c_1^{2n}c_2^{2n} - 2 c_1^n c_2^n \cos[n\phi]}{(1 - c_1^{2n})(1 - c_2^{2n})}}{\dfrac{1 + c_1^2c_2^2 - 2 c_1 c_2 \cos\phi}{(1-c_1^2)(1-c_2^2)}},
			\end{align*}
		where we have to also consider L'H{\^o}pital's rule for $c_1 = 1$ or $c_2 = 1$. Then we would get $\Norm{\bm z_1}^2 = n$ or $\Norm{\bm z_2}^2 = n$ respectively.
		\item If we make use of \ref{fcts_lmb_gen} and take the appropriate limits, this yields
		\Eq{
			\lambda(1,1,\phi) = \frac{1 - \cos(n\phi)}{n^2(1 - \cos\phi)} = \frac{\sin^2\left(\frac 12 n \phi\right)}{n^2 \sin^2\left(\frac 12 \phi\right)}  ,
		}{lambda_11}
		\item We simply manipulate $\lambda(c_1^{-1},c_2^{-1},\phi)$ by factoring out the proper powers of $c_1$ and $c_2$ and get $\lambda(c_1,c_2,\phi)$.
		%
		%
		%
		\item The statement follows from the fact that $-1 \leqslant \cos x  \leqslant 1$ for all $x \in \R$ and the locations where $\Abs{\cos x} = 1$.
		\item If we assume that $c_1 \cdot c_2 = c < 1$ then we can suppose without loss of generality that $c_1 \leqslant c_2$. If we hold $c$ fixed, we can rewrite $\lambda$ as
			\[\lambda(c_1,c_2,\phi) = \lambda(c/c_2,c_2,\phi) =\]
			\[= \frac{\left(\cos(n\phi) - c^n\right)^2 + \sin^2(n\phi)}{\left(\cos\phi - c\right)^2 + \sin^2\phi} \frac{(1-(c/c_2)^2)(1-c_2^2)}{(1 - (c/c_2)^{2n})(1 - c_2^{2n})} = \]
			\[= \frac{\left(\cos(n\phi) - c^n\right)^2 + \sin^2(n\phi)}{\left(\cos\phi - c\right)^2 + \sin^2\phi} \frac{1 + c^2\left(1 - \frac{1}{c_2^2}\right) - c_2^2}{1 + c^{2n}\left(1 - \frac{1}{c_2^{2n}}\right) - c_2^{2n}}.\]
		If we now take the limit for growing $c_2$ and fixed $\phi$, we get:
			\begin{align}
				\lim\limits_{c_2 \rightarrow \infty} &\lambda(c/c_2,c_2,\phi) = 0 \\
				\Text{for all} &\phi \in [0,2\pi] \Text{and} c<1. \notag
			\end{align}
		\item Let us first consider the case $c_1 \cdot c_2 < 1$. First we rewrite $\lambda$ in another form with the trigonometric identity $\sin^2x + \cos^2x = 1$ by which we obtain
			\Eq{
				\lambda(c_1,c_2,\phi) = \dfrac{\dfrac{\left(\cos(n\phi) - (c_1c_2)^n\right)^2 + \sin^2(n\phi)}{(1-c_1^2)(1-c_2^2)}}{\dfrac{\left(\cos\phi - c_1c_2\right)^2 + \sin^2\phi}{(1 - c_1^{2n})(1 - c_2^{2n})}}.
			}{lambda_sin}
		
		From \eqref{eq:lambda_sin}, we know that $\lambda$ is zero if and only if
			\[\sin(n\phi) = \cos(n\phi) - (c_1c_2)^n = 0.\]
		Let us assume that $\sin(n\phi) = 0$, which implies that $\phi = k \cdot (\pi/n)$. Now we know that $\Abs{\cos(k\pi)} = 1$ and the only case that $\cos(n\phi) - (c_1c_2)^n = 0$ is when $c_1 c_2 = 1$ what we explicitly excluded. In other words for $c_1c_2$ different than $1$, we can never find two vectors $z_1$ and $z_2$ which are orthogonal to each other. 
	}
\end{proof}
Some results from above are visualized in Figure~\ref{vanderHullc1c2}. The last Theorem allows us to derive a construction for orthogonal Vandermonde matrices, which can be considered a generalization of the fact that the Fourier matrix is orthogonal. The Fourier matrix is, depending on the definition, a scaled Vandermonde matrix, where the generating elements are placed on a regular grid on the unit circle, i.e. $\exp(2 \pi i k/n)$, $k \in [n-1]$. The corollary below additionally allows a uniform phase shift of the generating elements.
\begin{Cor}\label{vand_coh}
	Let $z_1,\dots,z_n$ be chosen with absolute value $1$ such that
		\[\arg\left(z_i z_j^*\right) = \frac{(i-j)2\pi}{n} \Text{for} 1 \leqslant j < i \leqslant n.\]
	Then for the matrix $\bm V = \nu_n(z_1,\dots,z_n)$ it follows that
		\[\mu(\bm V) = 0.\]
\end{Cor}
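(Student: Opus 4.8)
The plan is to reduce the entire claim to part~\ref{fcts_lmb_11} of Theorem~\ref{lambda_facts}, which already supplies a closed form for the normalized inner product of two unit-modulus Vandermonde columns. Since every generating element has absolute value $1$, I would first write $z_i = e^{\mathrm{i}\phi_i}$, so that the hypothesis $\arg(z_i z_j^*) = (i-j)2\pi/n$ says precisely that the phases lie on a regular grid, $\phi_i - \phi_j = (i-j)2\pi/n$. The $j$-th column of $\bm V = \nu_n(z_1,\dots,z_n)$, which I denote $\bm v_j = (z_j, z_j^2,\dots,z_j^n)^\trans$, is exactly the vector appearing in Theorem~\ref{lambda_facts} with amplitude $c_j = 1$, and consequently $\Norm{\bm v_j}^2 = n$ for every $j$.

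Next I would fix any two distinct columns with indices $i > j$ and observe that their squared normalized inner product equals $\lambda(1,1,\phi)$ with $\phi = \phi_i - \phi_j = (i-j)2\pi/n$. Writing $d = i-j$, part~\ref{fcts_lmb_11} of Theorem~\ref{lambda_facts} then yields
\[\frac{\Abs{\ScPr{\bm v_i}{\bm v_j}}^2}{\Norm{\bm v_i}^2\Norm{\bm v_j}^2} = \lambda\!\left(1,1,\tfrac{2d\pi}{n}\right) = \frac{\sin^2(d\pi)}{n^2\sin^2(d\pi/n)}.\]
The remaining step is to verify that this vanishes for every admissible pair. The numerator is $\sin^2(d\pi) = 0$ because $d$ is an integer, whereas the denominator is nonzero: from $i,j \in [n]$ and $i > j$ we have $1 \leqslant d \leqslant n-1$, hence $0 < d\pi/n < \pi$ and $\sin(d\pi/n) \neq 0$, so no indeterminate form arises. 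Thus each off-diagonal normalized inner product is zero, and since the coherence is the maximum of these quantities over all $1 \leqslant i < j \leqslant n$, I conclude $\mu(\bm V) = 0$.

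I do not anticipate a genuine obstacle, as the statement is essentially an immediate corollary of the $\lambda(1,1,\cdot)$ formula; the only point deserving a moment of care is confirming that the denominator never degenerates, i.e.\ that the grid spacing $d\pi/n$ stays strictly inside $(0,\pi)$ for all distinct column pairs, which is exactly the content of the range $1 \leqslant d \leqslant n-1$. The symmetry $\lambda(c_1,c_2,\phi) = \lambda(c_1,c_2,-\phi)$ recorded in part~\ref{fcts_lmb_sym} further shows that the computation is insensitive to the ordering of $i$ and $j$, so no separate case distinction is required.
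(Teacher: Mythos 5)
Your proof is correct and takes essentially the same route as the paper, which disposes of the corollary by citing part~\ref{fcts_lmb_roots} of Theorem~\ref{lambda_facts} (the characterization of the roots of $\lambda$ at $\phi = k\cdot 2\pi/n$ when $c_1c_2=1$); you instead verify the vanishing directly from the closed form in part~\ref{fcts_lmb_11}, which is the same computation made explicit. The check that $\sin(d\pi/n)\neq 0$ for $1\leqslant d\leqslant n-1$ is the right point to flag and is handled correctly.
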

\begin{proof}
	This is a direct consequence of \ref{fcts_lmb_roots} in Theorem \ref{lambda_facts}.
\end{proof}
But in the scenario typical for compressed sensing the involved matrices have more columns than rows and thus it is impossible for them to be orthogonal. Hence we are left with the objective to minimize the coherence as far as possible. The following algorithm, which is also one of the central novelties in this work, makes use of Theorem~\ref{lambda_facts} to achieve this.

The general geometric idea is twofold. First, the upper envelope $\kappa$ suggests that a large angular distance between generating elements yields a lower inner product, since it is monotonically decreasing on $(0, \pi)$. To maximize the mutual angular difference between each pair of generating elements, we need to place them on a regular angular grid on $[0, 2\pi]$. Second, facts \ref{fcts_lmb_sym} and \ref{fcts_lmb_limit} suggest that the inner product of two columns is minimized if their generating elements have amplitudes that are reziprocals of each other or, geometrically speaking, are reflections on the unit circle in $\C$ of each other. 

Combining the above two observations yields the algorithm as stated below. Loosely speaking it places the angularly nearest neighbors among the generating elements on different sides of the complex unit circle and the ratio between the amplitudes is the parameter $c$. 
\begin{Alg}[Vandermonde matrices]\label{alg:vander_better}$~$\\
	\textsc{Input:}
		\items{
			\item Dimensions $n$ and $m$
			\item Positive constant $c$
		}
	\textsc{Output:}
		\items{
			\item A Vandermonde matrix $\bm V \in \C^{n \times m}$
		}
	\textsc{Procedure:}
		\enum{
			\item Set $\hat{m} = 2\lceil m/2 \rceil$, $c_1 = c$ and $c_2 = 1/cy$.
			\item Set $\phi_k = 4\pi\frac{k-1}{\hat{m}}$ and $z_k = c_1 \cdot e^{i \cdot \phi_k}$ for $k \in [\hat{m}/2]$.
			\item Set $\phi_k = 4\pi\frac{k-1}{\hat{m}} + 2\pi\frac{1}{\hat{m}}$ and $z_k = c_2 \cdot e^{i \cdot \phi_k}$ for $k \in \{\hat{m}/2 + 1,\dots,\hat{m}\}$.
			\item Return $\bm V = \nu(z_1,\dots,z_m)$.
		}
\end{Alg}
The algorithm above still depends on the input parameter $c$, but the optimal value can be computed with a simple bisection algorithm and depends only on $n$ and $m$. This allows it to be stored in a persistent lookup table.

Additionally we would like to derive bounds on the coherence. This allows efficient estimation of the algorithm's performance before its application. The bounds exploit the fact that the above algorithms output is deterministic and thus the upper and lower envelope $\kappa$ and $\eta$ can be applied. The following two Theorems deal with the upper and lower bounds respectively.

A lower bound on the coherence can be used together with equation \eqref{eqn:kmax_ssr} to derive lower bounds on the number of measurements required in order to achieve a recovery guarantee for a given sparsity order $K_{\rm max}$.
\begin{Thm}\label{lwr_vander_bnd}
	Given $n$, $m > n$ and $c \in (0,1)$ and the output $\bm V$ of \ref{alg:vander_better} for inputs $n$, $m$ and $c$ it holds that
		\begin{align}
			\mu(\bm V) \geqslant \Max\big\{ &\sqrt{\kappa(c,c,4\pi/m)}, \notag \\
			&\sqrt{\kappa(c,1/c,2\pi/m)}\big\}.
		\end{align}
\end{Thm}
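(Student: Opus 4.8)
The plan is to read the coherence of $\bm V$ off the geometry of the generating elements produced by \ref{alg:vander_better} and to bound it from below by its two closest pairs of columns. The starting point is that, by the definition \eqref{eq:lambda_def} of $\lambda$ together with \eqref{coherence}, the squared modulus of the normalized inner product of two Vandermonde columns with amplitudes $c_i, c_j$ and phase difference $\phi_{ij}$ is exactly $\lambda(c_i, c_j, \phi_{ij})$. Since every column of $\bm V$ is such a column, this yields
\[
\mu(\bm V)^2 = \Max_{i \neq j} \lambda(c_i, c_j, \phi_{ij}),
\]
so that the value of $\lambda$ contributed by any single pair of columns is already a lower bound for $\mu(\bm V)^2$.

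Next I would make the angular layout explicit. Reducing all phases modulo $2\pi$, the three assignment steps of \ref{alg:vander_better} place the $\hat m = 2\lceil m/2\rceil$ generating elements on a uniform grid of angular spacing $2\pi/\hat m$, with amplitudes alternating between $c$ and $1/c$ along the circle. From this picture I would isolate two pairs that are both present among the first $m$ retained columns: the nearest pair of \emph{equal} amplitude, with amplitudes $(c,c)$ and phase gap $4\pi/\hat m$, and the nearest pair overall, with amplitudes $(c, 1/c)$ and phase gap $2\pi/\hat m$. Substituting these into the identity above gives $\mu(\bm V)^2 \geq \lambda(c,c,4\pi/\hat m)$ and $\mu(\bm V)^2 \geq \lambda(c, 1/c, 2\pi/\hat m)$.

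The remaining two steps replace $\lambda$ by its lower envelope and then pass from $\hat m$ to $m$. Because $c \in (0,1)$ we have $c \neq 1$ and $1/c \neq 1$, so \ref{fcts_lmb_hulls} of Theorem \ref{lambda_facts} applies and yields $\lambda \geq \kappa$ at both angles, i.e.\ $\mu(\bm V)^2 \geq \kappa(c,c,4\pi/\hat m)$ and $\mu(\bm V)^2 \geq \kappa(c, 1/c, 2\pi/\hat m)$. Finally, $\kappa(c_1,c_2,\cdot)$ is non-increasing on $(0,\pi)$ for both amplitude configurations, since its sole dependence on $\phi$ enters through the increasing term $-2 c_1 c_2 \cos\phi$ in its denominator; together with $\hat m \geq m$, which forces $4\pi/\hat m \leq 4\pi/m$ and $2\pi/\hat m \leq 2\pi/m$, this gives $\kappa(c,c,4\pi/\hat m) \geq \kappa(c,c,4\pi/m)$ and $\kappa(c,1/c,2\pi/\hat m) \geq \kappa(c,1/c,2\pi/m)$. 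Taking the larger of the two lower bounds and extracting the square root then produces the claimed inequality.

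I expect the work to be careful bookkeeping rather than a deep argument. The one place to be meticulous is extracting the exact amplitudes and phase gaps of the two distinguished pairs from the construction --- in particular the $2\pi/\hat m$ offset of the second, reciprocal-amplitude ring and the reduction of its phases modulo $2\pi$ --- and then keeping the direction of every inequality straight when swapping $\hat m$ for $m$. The single genuine point to verify is that the relevant angles lie in the monotone regime of $\kappa$, namely $4\pi/m \leq \pi$; this holds as soon as $m \geq 4$, and the finitely many smaller instances can be checked by hand.
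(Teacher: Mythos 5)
Your proposal is correct and follows essentially the same route as the paper's proof: bound $\mu(\bm V)^2$ from below by $\lambda$ evaluated at the closest same-circle pair (amplitudes $(c,c)$, gap $4\pi/m$) and the closest cross-circle pair (amplitudes $(c,1/c)$, gap $2\pi/m$), then pass to the lower envelope via $\lambda \geqslant \kappa$ from Theorem~\ref{lambda_facts}. Your explicit handling of the $\hat m$ versus $m$ discrepancy for odd $m$, via the monotonicity of $\kappa$ in $\phi$, is a small point of added care that the paper's proof glosses over, but it does not change the argument.
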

\begin{proof}
	From Theorem~\ref{lambda_facts} we know that $\lambda(c_1,c_2,\phi) \geqslant \kappa(c_1,c_2,\phi)$ for all $c_1$, $c_2$ and $\phi$. Algorithm~\ref{alg:vander_better} produces a matrix $\bm V$ so that the minimum angle between two elements in its first row is $2\pi/m$. Moreover those elements are located either on the circle with radius $c$ or the circle with radius $1/c$. If two elements share the same circle, their minimum angle between each other is $4\pi/m$. If not they enclose the angle $2\pi/m$. This yields the estimate
		\begin{align}
			\mu(\bm V)^2 \geqslant \Max\big\{& \kappa(1/c,1/c,4\pi/m),\kappa(c,c,4\pi/m), \notag \\
			&\kappa(c,1/c,2\pi/m),\kappa(1/c,c,2\pi/m)\big\},
		\end{align}
	which becomes what we asserted when we consider the symmetry of $\kappa$ as stated in fact \ref{fcts_lmb_hulls} in Theorem~\ref{lambda_facts} and that $\lambda$ is the square of the inner product of two vectors with Vandermonde structure.
\end{proof}
Conversely we can use equation \eqref{eqn:kmax_ssr} and upper coherence bounds to derive upper bounds on the number of measurements required to guarantee successful recovery in scenarios with a certain level of sparsity.
\begin{Thm}\label{upp_vnd_bnd}
	If we define the function $u \D (0,1] \times \N \rightarrow \R$ by
	\[u(c,m) = \begin{cases}
						\max\{\eta(c,c,4\pi/m),\eta(1/c,c,2\pi/m)\} \\\Text{for} m < 2n, \\
						\max\{\eta(c,c,4\pi/m),\lambda(1/c,c,2\pi/m)\} \\ \Text{for} 2n \leqslant m \leqslant 4n, \\
						\max\{\lambda(c,c,4\pi/m),\lambda(1/c,c,2\pi/m)\} \\ \Text{for} m > 4n,
	              \end{cases}
	\]
	and call Algorithm~\ref{alg:vander_better} with parameters $n$, $m$ and $c$, then for its output $\bm V$ it follows that:
		\[\mu(\bm V) \leqslant \sqrt{u(c,m)}.\]
\end{Thm}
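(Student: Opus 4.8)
The plan is to reduce the coherence to a maximum of finitely many values of $\lambda$ and then bound each of them with the envelopes supplied by Theorem~\ref{lambda_facts}. First I would record the geometry produced by Algorithm~\ref{alg:vander_better}: the generating elements sit on the two circles of radius $c$ and $1/c$, and their phases form an interleaved regular grid of spacing $2\pi/m$ (up to the even/odd correction $\hat m = 2\lceil m/2\rceil$). Consequently, writing $\mu(\bm V)^2 = \Max_{i\neq j}\lambda(c_i,c_j,\phi_{ij})$ because $\lambda$ is the squared normalized inner product of two Vandermonde columns, the pairs split into two families: \emph{same-circle} pairs with amplitude configuration $(c,c)$ (or $(1/c,1/c)$, which by the symmetry \ref{fcts_lmb_sym} gives the same value) and phase differences that are multiples of $4\pi/m$, smallest $4\pi/m$; and \emph{cross-circle} pairs with configuration $(1/c,c)$ and phase differences that are odd multiples of $2\pi/m$, smallest $2\pi/m$. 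Hence $\mu(\bm V)^2 = \Max\{M_{\mathrm{same}},M_{\mathrm{cross}}\}$, where $M_{\mathrm{same}}=\Max_j\lambda(c,c,4\pi j/m)$ and $M_{\mathrm{cross}}=\Max_j\lambda(1/c,c,(2j-1)2\pi/m)$, and it suffices to bound each family separately.

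The engine for the bounds is the pair of envelope facts in \ref{fcts_lmb_hulls}: the inequality $\lambda\le\eta$ with $\eta(c_1,c_2,\cdot)$ monotonically decreasing on $(0,\pi)$ (its $\phi$-dependence enters only through the increasing quantity $1+c_1^2c_2^2-2c_1c_2\cos\phi$), together with the contact identity $\lambda(c_1,c_2,\pi/n)=\eta(c_1,c_2,\pi/n)$. This yields at once the \emph{conservative} bound $M\le\eta(\phi_{\min})$, valid for any configuration, since all grid phases are $\ge\phi_{\min}$ and $\eta$ is monotone. That already supplies the terms $\eta(c,c,4\pi/m)$ and $\eta(1/c,c,2\pi/m)$ that appear in $u(c,m)$ in the regimes where the smallest phase lies beyond $\pi/n$.

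To obtain the \emph{sharp} bound $M=\lambda(\phi_{\min})$ in the remaining regimes, I would show that when $\phi_{\min}\le\pi/n$ the first grid point dominates. The argument splits in two: for grid phases $\phi\in[\phi_{\min},\pi/n]$ I invoke monotonicity of $\lambda(c_1,c_2,\cdot)$ on $[0,\pi/n]$ to get $\lambda(\phi)\le\lambda(\phi_{\min})$; for grid phases $\phi>\pi/n$ I chain $\lambda(\phi)\le\eta(\phi)\le\eta(\pi/n)=\lambda(\pi/n)\le\lambda(\phi_{\min})$, using the monotonicity of $\eta$ and the contact identity. Comparing $\phi_{\min}$ to $\pi/n$ then produces exactly the announced thresholds: for the cross-circle family $2\pi/m\le\pi/n\iff m\ge 2n$, and for the same-circle family $4\pi/m\le\pi/n\iff m\ge 4n$. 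Splitting on $m<2n$, $2n\le m\le 4n$, $m>4n$ and combining the same- and cross-circle bounds reproduces the three cases of $u(c,m)$ verbatim.

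The main obstacle is the monotonicity of $\lambda(c_1,c_2,\cdot)$ on $[0,\pi/n]$ needed for the sharp case. For the cross-circle configuration $c_1c_2=1$ this is clean: analogously to \ref{fcts_lmb_11}, a short computation from \ref{fcts_lmb_gen} shows that $\lambda(1/c,c,\phi)$ is a positive constant times the squared Dirichlet kernel $\sin^2(n\phi/2)/\sin^2(\phi/2)$, which is decreasing on its main lobe $[0,2\pi/n]\supseteq[0,\pi/n]$, and whose later lobes are strictly lower. For the same-circle configuration one must instead control the ratio $(1+c^{4n}-2c^{2n}\cos n\phi)/(1+c^4-2c^2\cos\phi)$ on $[0,\pi/n]$, where numerator and denominator both increase, so the decrease of $\lambda$ is not automatic and requires a genuine quotient/derivative estimate; this is where the real work lies. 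A minor but necessary bookkeeping point is the replacement of $m$ by $\hat m=2\lceil m/2\rceil$ for odd $m$, which shifts the minimal spacings and must be tracked so that the stated arguments $4\pi/m$ and $2\pi/m$ remain valid upper bounds.
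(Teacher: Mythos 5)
Your proposal is correct and follows essentially the same route as the paper's proof: the same split into same-circle pairs (minimal spacing $4\pi/m$, configuration $(c,c)$) and cross-circle pairs (minimal spacing $2\pi/m$, configuration $(1/c,c)$), the same use of the monotonicity of $\eta$ for the conservative regimes, and the same chaining through the contact point at $\pi/n$ combined with monotonicity of $\lambda$ near the origin for the sharp regimes, yielding the identical thresholds $m<2n$, $2n\leqslant m\leqslant 4n$, $m>4n$. The one step you flag as requiring "real work" --- the monotonicity of $\lambda(c,c,\cdot)$ on the initial interval --- is in fact also asserted without proof in the paper (``the monotony of $\lambda$ on $[0,2\pi/n]$''), so your treatment is at the same level of rigor while being more candid about where the gap lies.
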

\begin{proof}
	Suppose we have an output $\bm V$ of Algorithm~\ref{alg:vander_better} for given $c$, $n$ and $m$. We remind ourselves that the first row of $\bm V$ is placed on two circles with radii $c$ and $1/c$. Then we know that the angle between two elements in the first row of $\bm V$ which are not on the same circle is an integer multiple of $2\pi/m$. Elements on the same circle enclose an angle which is an integer multiple of $4\pi/m$.

	\noindent\textbf{Case 1:} Suppose that $m < 2n$. From the monotony of $\eta$ and its symmetry
		\[\eta(c_1,c_2, \pi - \phi) = \eta(c_1,c_2, \pi + \phi) \Text{for} \phi \in (0,\pi)\]
	we obtain
	\[\eta(c,c,4\pi/m) \geqslant \eta(c,c,k \cdot 4\pi/m) \geqslant \lambda(c,c,k \cdot 4\pi/m)\]
	for $k \in \{1,\dots,\lfloor m/2 \rfloor\}$ and 
	\[\eta(1/c,c,2\pi/m) \geqslant \eta(1/c,c,k  2\pi/m)\]
	for $k \in \{1,\dots,m\}$. Since the above estimates address all possible values $\lambda$ can take for $\bm V$ and the maximum over all values $\lambda$ attains the square of the coherence of $\bm V$, we have concluded the proof for $m < 2n$.
	
	\noindent\textbf{Case 2:} Suppose now that $4n \geqslant m \geqslant 2n$. From the monotony of $\eta$ we can again follow that
		\begin{align}
			\lambda(1/c,c,2\pi/m) & > \eta(1/c,c,\pi/n) \geqslant \eta(1/c,c,k \cdot \pi/n) \notag\\
			&\geqslant \lambda(1/c,c,k \cdot 2\pi/m) \notag \\
			&\Text{for} k \in \{2,\dots,m\}.
		\end{align}
	Moreover, from $4n > m \geqslant 2n$ we get
		\[\frac{\pi}{n} < \frac{4\pi}{m} \leqslant \frac{2\pi}{n}\]
	and estimate 
		\begin{align}
			\eta(c,c,4\pi/m) & \geqslant \eta(c,c,k \cdot 4\pi/m) \geqslant \lambda(c,c,k \cdot 4\pi/m) \notag \\
			&\Text{for} k \in \{1,\dots,\lfloor m/2 \rfloor\}.
		\end{align}
	These estimates address all possible values for $\lambda$ and so the statement follows in this case.
	
	\noindent\textbf{Case 3:} Suppose now that $m > 4n$. From the monotony of $\lambda$ on $[0,2\pi/n]$ and the fact that $4\pi/m$ and $2\pi/m$ are left of the first touching point of $\lambda$ and $\eta$, namely $\pi/n$, we can estimate
		\begin{align}
			\lambda(c,c,4\pi/m) & \geqslant \lambda(c,c,k_1 \cdot 4\pi/m) \geqslant \lambda(c,c,\pi/n) \notag \\
			& = \eta(c,c,\pi/n) \geqslant \eta(c,c,k_2 \cdot 4\pi/m) \notag \\
			&\geqslant \lambda(c,c,k_2 \cdot 4\pi/m),
		\end{align}
	for $k_1 \in \{1,\dots,\lfloor m/(4n) \rfloor\}$ and $k_2 \in \{\lfloor m/(4n) \rfloor + 1, \dots, \lfloor m/2 \rfloor\}$. With a similar argument we estimate
		\begin{align*}
			\lambda(1/c,c,2\pi/m) &\geqslant \lambda(1/c,c, k_3 \cdot 2\pi/m) \geqslant \lambda(1/c,c,\pi/n) \\
			&= \eta(1/c,c,\pi/n) \geqslant \eta(1/c,c,k_4 \cdot 2\pi/m) \\
			&\geqslant \lambda(1/c,c,k_4 \cdot 4\pi/m),
		\end{align*}
	for $k_1 \in \{1,\dots,\lfloor m/(2n) \rfloor\}$ and $k_2 \in \{\lfloor m/(2n) \rfloor + 1, \dots, m\}$. This again estimates all values $\lambda$ ever takes and we conclude the proof.	
\end{proof}

\section{Simulations}\label{sec:numerics}
\DTLloaddb{vand_params}{matlab/data/simu_vand_params.csv}
\DTLloaddb{soe_params}{matlab/data/simu_soe_params.csv}
This chapter is dedicated to empirical investigations for showcasing the performance of the proposed methods. To this end, we consider measurement scenarios as in \eqref{orig_model}. In order to generate the ground truths $\bm x$, we draw the amplitudes on the support of $\bm x \in \C^N$ for $N = 512$ i.i.d. from the set $\{\pm 1 \pm i, \pm 1 \mp i\}$ according to a uniform distribution. We carry out all our simulations with $m$ noisy measurements where the components of the additive noise vector $\bm n \in \C^m$ were drawn independently from a zero mean circularly symmetric Gaussian distribution with variance $\sigma^2$, where the total SNR is then defined as $\sigma^{-2}$. Depending on the situation at hand, the construction of $\bm A$ changes accordingly, which is chosen once for each scenario and kept fixed when sampling the ground truths for a certain scenario type and size. We carry out $2000$ trials for each scenario and level of SNR.
\subsection{Sparsity Order Estimation}
To depict the performance of the proposed method for SOE, we simulate the procedure for various combinations of parameters $m$ and $p$. Here, we set the sparsity order of the ground truth $\bm x \in \C^N$ to $8$. After setting the number of measurements $m$ and the overlap $p$, we select $k$ and $\ell$ in order to maximize the size of the reshaped matrix $\bm B$ for optimal performance according to Remark \ref{parameter_choice}. The Vandermonde factors in the columnwise Kronecker product are constructed by Algorithm~\ref{alg:vander_better} and the unstructured factors are drawn once from a Gaussian ensemble, which means that all elements are drawn independently from a Gaussian distribution with variance $1$. Both factors in each of the occuring instances of a columnwise Kronecker product have normalized columns, resulting in the product having normalized columns as well.

To study the influence of the overlap we vary the parameter $p$ for fixed signal size $N$ and number of measurements $m$ while adapting $k$ and $\ell$ appropriately. For the estimation of the effective rank of $\bm B$, we apply the Empirical Eigenvalue-Threshold Test (ETT)~\cite{lavrenko2014ettsoe} to our scenario with a target false rejection rate of $0.005$. The training stage of this model order selection method is the computationally most time consuming part, but this procedure only has to be done once in advance.

\linePlot{SNR}{mean of estimated sparsity order}{
legend style={
legend columns=2,
},
xmin=0,
xmax=50
}{
\addplot table [col sep=comma,x=SNR,y=soe_mean]{matlab/data/soe_121_1.csv};
\addplot table [col sep=comma,x=SNR,y=soe_mean]{matlab/data/soe_121_2.csv};
\addplot table [col sep=comma,x=SNR,y=soe_mean]{matlab/data/soe_121_4.csv};
\addplot table [col sep=comma,x=SNR,y=soe_mean]{matlab/data/soe_121_0.csv};
}{
$p = 1$ max. overlap,$p = 2$,$p = 4$,$p = \ell$ no overlap}
{Influence of overlap $p$ on the sparsity order estimation stage with $m = 121$ measurements.}{comp_overlap_soe}

Figure~\ref{comp_overlap_soe} displays the empirical mean of the estimated sparsity order across all trials for a varying levels of the SNR. As expected for the pure task of SOE the case $p = 1$ displays the best performance, because $\Min\{k,\ell\}$ is maximal. For increasing $p \geqslant 2$, including the case $p = \ell$, the phase transitions happens at an increasing level of SNR and the transition itself is not as sharp as in the cases of overlap.

To study a more realistic processing pipeline where the true sparsity order is not known in advance but just a more or less sharp upper bound, we combine the process of SOE with the reconstruction using OMP by feeding the estimated sparsity order into the algorithm as a runtime parameter. This is compared to an ``unguided'' OMP reconstruction, where we assume to only know upper bounds $K_{\mathrm{Max}} = 40$ and $K_{\mathrm{Max}} = 20$. Then OMP is run for $K$ steps.

\linePlotTwo{SNR}{$l_2$-norm error}{
xmin=20,
xmax=45,
ymode=log,
}{
\addplot table [col sep=comma,y=omp_gauss_MSE_max1,x=SNR]{matlab/data/soe_169_0_X.csv};\label{plot_one1}
\addplot table [col sep=comma,y=omp_gauss_MSE_max2,x=SNR]{matlab/data/soe_169_0_X.csv};\label{plot_one2}
\addplot table [col sep=comma,y=omp_soe_MSE,x=SNR]{matlab/data/soe_169_0.csv};\label{plot_two}
}
{
succ rate
}{
legend style={
legend columns=1,
},
xmin=20,
xmax=45,
}
{
\addlegendimage{/pgfplots/refstyle=plot_one1}\addlegendentry{$l_2$ error without SOE and $K_{\mathrm{Max}} = 40$ (left axis)}
\addlegendimage{/pgfplots/refstyle=plot_one2}\addlegendentry{$l_2$ error without SOE and $K_{\mathrm{Max}} = 20$ (left axis)}
\addlegendimage{/pgfplots/refstyle=plot_two}\addlegendentry{$l_2$ error with SOE (left axis)}
\addplot+[tui_red_dark] table [col sep=comma,y=soe_succ,x=SNR]{matlab/data/soe_169_0.csv};
\addlegendentry{success rate of SOE (right axis)}
}
{Influence of SOE on reconstruction quality when using estimated sparsity order as parameter for OMP.}{soe_raw_comp}

As one can see in Figure~\ref{soe_raw_comp}, where the success rate of SOE and the resulting $l_2$ error after reconstruction are displayed in a combined plot, as soon as SOE starts to work, the reconstruction error decreases significantly below the one of the ``unguided'' reconstruction. So in this case the additional computational effort for SOE results in a significant increase in reconstruction precision.

As a last study, we investigate the reconstruction performance of the matrices that obey the structural constraints imposed by our means of SOE. To this end, we simulate a scenario with a priori known sparsity order to remove the influence of the SOE procedure on the reconstruction process. Here we expect differing performance since for decreasing $p$, the highly structured Vandermonde block in the sensing matrix becomes more dominant thus increasing the overall coherence of $\bm A$. The results are depicted in Figure \ref{comp_overlap_omp}. Here one can see that the case of no overlap $p = \ell$ comes close to the Gaussian measurement matrices despite the imposed columnwise Kronecker structure. Moreover one notices that the Vandermonde factor in the columnwise Kronecker product for the case of $p < \ell$ is the reason for a worse reconstruction performance.

\linePlot{SNR}{$l_2$-norm error}{
legend style={
legend columns=2,
},
xmin = 0,
xmax = 20,
ymode=log,
}{
\addplot table [col sep=comma,x=SNR,y=omp_nosoe_MSE]{matlab/data/soe_64_1.csv};
\addplot table [col sep=comma,x=SNR,y=omp_nosoe_MSE]{matlab/data/soe_64_2.csv};
\addplot table [col sep=comma,x=SNR,y=omp_nosoe_MSE]{matlab/data/soe_64_4.csv};
\addplot table [col sep=comma,x=SNR,y=omp_nosoe_MSE]{matlab/data/soe_64_0.csv};
\addplot[very thick,black,densely dotted] table [col sep=comma,x=SNR,y=omp_gauss_MSE_true]{matlab/data/soe_64_4.csv};
}{
$p = 1$ max. overlap,$p = 2$,$p = 4$,$p = \ell$ no overlap, unstructured Gaussian matrix}
{Influence of overlap $p$ on the reconstruction stage using OMP with $m = 96$ measurements.}{comp_overlap_omp}
\subsection{Vandermonde Matrices}

As we have seen before, in the case of block overlap our method of SOE requires that one of the factors has a Vandermonde structure and motivated by that, we derived Algorithm~\ref{alg:vander_better} in order to get good reconstruction performance even with these rigid structural constraints. To conclude our numerical investigations we present simulation results that display the gain achieved through Algorithm~\ref{alg:vander_better} compared to several other methods for constructing Vandermonde structured sensing matrices and also compared to unstructured matrices drawn from a Gaussian ensemble as a baseline, since they are well known for their good performance during reconstruction~\cite{candes2006nearoptrec}. As a means of reconstruction, we present Orthogonal Matching Pursuit (OMP)~\cite{pati93omp}, because it is a well known and well studied algorithm. Moreover, it is known that the coherence of the involved sensing matrix is involved in performance and stability results of OMP. 

For comparison, we consider Gaussian sensing matrices constructed by drawing each entry identically and independently from a complex zero mean circular symmetric distribution. Then each column is normalized to unit length. We compared these Gaussian matrices with three types of Vandermonde matrices: (a)  independently drawing their generating elements $z_1, \dots, z_N$ from $Y \sim \exp(i 2 \pi U)$, where $U \sim \Unif[0,1]$; (b) deterministically constructing the generating elements by placing them on a regular grid on the complex unit circle, so $z_k = \exp(i 2\pi (k-1)/N)$; (c) according to Algorithm~\ref{alg:vander_better}. As in the case of the Gaussian matrices, we rescale each column of the Vandermonde matrices to unit length.

We also restrict the support sets of the underlying signals $\bm x$ such that the minimum distance between nonzero elements is larger than a certain quantity. To motivate this structural assumption, note that Vandermonde matrices have highly correlated columns, if their generating elements' projections on the complex unit circle are closely located. This has already been established in Theorem \ref{lambda_facts}, where we show that 
\[\lambda(c_1,c_2,\arg(z_1-z_2)) \rightarrow 1 \Text{for} \Abs{\arg(z_1-z_2)} \rightarrow 0.\]
Clearly, if one increases the number of generating elements for the mentioned methods (b) or (c) and fixes one arbitrary generating element, the distance with respect to the adjacent generating elements' arguments decreases and thus the coherence of the resulting $\bm V$ gets closer to $1$. In terms of recovering a sparse vector $\bm x$ in this context means that if two distinct $i,j \in \Supp(\bm x)$ are close to each other in the sense that $\Abs{i -j}/m \leqslant \varepsilon(n)$, then stably recovering this vector becomes impossible. This drawback originates from the structural impositions on $\bm V$ and as such cannot be circumvened. 

In the context where the argument of the generating elements correspond to a parameter $\vartheta \in [0, 2\pi)$ and the columns of $\bm V$ are atoms described by this parameter, e.g. \cite{malioutov2005ssr_source_loc}, we can think of the number of columns $m$ of $\bm V$ as a granularity on $[0,2\pi)$. So increasing $m$ yields a finer grid on the parameter set, or in terms of the function $\lambda$ a finer discrete sampling. However, this does \textit{not} decrease $\varepsilon(n)$ from above and as such does not reduce the distance between two parameters that are present in $\bm x$ and one still can resolve during reconstruction. 

To include these observations in our simulations, we restrict the set of possible supports of the underlying signals $\bm x$. This is done by considering a Vandermonde matrix $\bm V \in \C^{n \times m}$ with its $m$ generating elements regularly placed on the unit circle. Then we just sample support patterns such that the resulting ground truth $\bm x$ only contains columns such that their generating elements' distances with respect to their arguments exceeds $2 \pi / n$. This corresponds to making an assumption about the structure of the occuring signals and is known under the notion structured sparsity \cite{duarte2011structuredCS}.

\linePlot{SNR}{success rate}{
legend style={
legend columns=2,
},
xmin=-15,
xmax=5,
}{
\addplot table [col sep=comma,x=SNR,y=vand_rnd_supp]{matlab/data/vand_96.csv};
\addplot table [col sep=comma,x=SNR,y=vand_reg_supp]{matlab/data/vand_96.csv};
\addplot table [col sep=comma,x=SNR,y=vand_opt_supp]{matlab/data/vand_96.csv};
\addplot table [col sep=comma,x=SNR,y=rnd_supp]{matlab/data/vand_96.csv};
}{
unif. dist. on unit circle, reg. dist. on unit circle , output of Algorithm~\ref{alg:vander_better}, Gaussian matrix}
{Comparison of correct support detection during reconstruction using various Vandermonde algorithms against the Gaussian ensemble with $m = 96$ measurements.}{comp_vander_supp}

The results in Figure~\ref{comp_vander_supp} display the effect of Algorithm~\ref{alg:vander_better} on the rate of correct support detection during reconstruction. This is a valid performance measure for reconstruction schemes that produce exactly sparse results, like the OMP algorithm does, since it chooses the amplitudes on the detected support from the solution to the least squares problem on the span of the selected columns. The proposed method yields matrices that improve on the performance of the Vandermonde matrices with elements placed on a uniform grid on the complex unit circle. It is also clear why not all supports are correctly detected, since some of them have non separable elements due to the small distance of the corresponding generating elements of the Vandermonde matrices, which in the presence of noise results in reconstruction errors.

\linePlot{SNR}{MSE of reconstruction}{
legend style={
legend columns=2,
},
xmin=-20,
xmax=20,
ymode=log
}{
\addplot table [col sep=comma,x=SNR,y=vand_rnd_MSE]{matlab/data/vand_96.csv};
\addplot table [col sep=comma,x=SNR,y=vand_reg_MSE]{matlab/data/vand_96.csv};
\addplot table [col sep=comma,x=SNR,y=vand_opt_MSE]{matlab/data/vand_96.csv};
\addplot table [col sep=comma,x=SNR,y=rnd_MSE]{matlab/data/vand_96.csv};
}{
unif. dist. on unit circle, reg. dist. on unit circle , output of Algorithm~\ref{alg:vander_better}, Gaussian matrix}
{Comparison in terms of the $l_2$ norm between the ground truth and its reconstruction for various Vandermonde algorithms and the Gaussian ensemble with $m = 96$ measurements.}{comp_vander_mse}

As a second performance metric we also measure the error in the $l_2$-norm between reconstructions and ground truths. These results can be found in Figure~\ref{comp_vander_mse}. Here, the numerical findings correspond to those above and the proposed method displays a lower reconstruction error than the other two means of constructing Vandermonde matrices. As expected, no instance of these highly structured matrices can compete with the performance displayed by matrices from the Gaussian ensemble, which do not have to fulfill any structural requirements.

\section{Conclusions}
In this paper we investigate the problem of estimating the unknown sparsity order from compressive measurements without the need to carry out a sparse recovery stage. We focus on the more challenging case of a single compressed observation vector. As we show, the sparsity order estimation problem can be transformed into a rank estimation problem of a rearranged matrix version of the observation vector if and only if the compressed sensing measurement matrix has a Khatri-Rao structure. Moreover, if the columns of the matrix contain partially overlapping signal blocks, one of the matrices needs to possess a Vandermonde structure. A  larger amount of overlap allows a higher sparsity order to be estimated but also leads to more stringent structural constraints on the measurement matrix.

The Khatri-Rao and Vandermonde constraints on the sompressed sensing measurement motivate us to analyze the suitable choice of the measurement matrices that allows the proposed sparsity order estimation and yet achieves a low coherence. In particular, we analyze the achievable coherence of Vandermonde matrices and propose a design that yields Vandermonde matrices with a low coherence.

Our numerical results demonstrate the trade-off between the maximal sparsity order that can be estimated and the coherence of the corresponding measurement matrices. Moreover, they clearly show the benefit of the proposed low-coherence Vandermonde matrix design.

\bibliographystyle{IEEEsort}
\bibliography{snippets/bib/cs,snippets/bib/sigproc}

\end{document}